\def\Comment#1{\textsl{$\langle\!\langle$#1\/$\rangle\!\rangle$}}
\newcommand{\myparskip}{3pt}
\newtheorem{lemma}{Lemma}[section]
\newtheorem{theorem}[lemma]{Theorem}
\newtheorem{definition}[lemma]{Definition}
\newtheorem{prop}[lemma]{Proposition}
\newtheorem{claim}[lemma]{Claim}
\newcommand{\yes}{{\sc Yes-Instance}\xspace}
\newcommand{\no}{{\sc No-Instance}\xspace}
\newenvironment{note}[1]{\medskip\noindent \textbf{#1:}}%
        {\medskip}
\newcommand{\opt}{\textsc{OPT}}
\newcommand{\etal}{{\em et al.}\ }
\newcommand{\ignore}[1]{}
\newcommand{\mc}[2]{\multicolumn{#1}{c}{#2}}
\def\floor#1{\lfloor {#1} \rfloor}
\def\ceil#1{\lceil {#1} \rceil}
\def\script#1{\mathcal{#1}}
\def\stdLP{\eqref{eq:std}\xspace}
\def\KCLP{\eqref{eq:kc}\xspace}
\def\kway{$k$-Way--$\script{R}$-Connected Subgraph\xspace}
\def\P{\script{P}}
\def\C{\script{C}}
\def\leader{\texttt{Leader}}
\def\hleader{h\texttt{-Leader}}
\def\class{{\tt class}} 
\def\capR{Cap-$R$-Connected Subgraph\xspace}
\renewenvironment{proof}{\vspace{-0.1in}\noindent{\bf Proof:}}%
        {\hspace*{\fill}$\Box$\par}
\newenvironment{proofof}[1]{\smallskip\noindent{\bf Proof of #1:}}%
        {\hspace*{\fill}$\Box$\par}
        {\hspace*{\fill}$\Box$\par}
\title{Approximability of Capacitated Network Design}
\author{
Deeparnab Chakrabarty\thanks{Department of Computer and Information Science, University of Pennsylvania,
Philadelphia PA. Email: {\tt deepc@seas.upenn.edu}.}
\and
Chandra Chekuri\thanks{Dept. of Computer Science, University of Illinois,
Urbana, IL 61801. Partially supported by NSF grants CCF-0728782 and
CNS-0721899. {\tt chekuri@cs.illinois.edu}}
\and
Sanjeev Khanna \thanks{Dept. of Computer \& Information Science, University of Pennsylvania,
Philadelphia, PA 19104. Supported in
part by NSF Awards CCF-0635084 and IIS-0904314. {\tt sanjeev@cis.upenn.edu}}
\and
Nitish Korula\thanks{Dept. of Computer Science, University of Illinois, Urbana,
    IL 61801. Partially supported by NSF grant CCF 0728782 and a
    University of Illinois Dissertation Completion Fellowship. {\tt
      nkorula2@illinois.edu}}
}
\date{}
\begin{document}
\maketitle
\begin{abstract}
  In the {\em capacitated} survivable network design problem
  (Cap-SNDP), we are given an undirected multi-graph where each edge
  has a capacity and a cost.  The goal is to find a minimum cost
  subset of edges that satisfies a given set of pairwise minimum-cut
  requirements.  Unlike its classical special case of SNDP when all
  capacities are unit, the approximability of Cap-SNDP is not well
  understood; even in very restricted settings no known algorithm
  achieves a $o(m)$ approximation, where $m$ is the number of edges in
  the graph. In this paper, we obtain several new results and insights
  into the approximability of Cap-SNDP.

  We give an $O(\log n)$ approximation for a special case of Cap-SNDP
  where the global minimum cut is required to be at least $R$. (Note
  that this problem generalizes the min-cost $\lambda$-edge-connected
  subgraph problem, which is the special case of our problem when all
  capacities are unit.)  Our result is based on a rounding of a
  natural cut-based LP relaxation strengthened with knapsack-cover
  (KC) inequalities. Our technique extends to give a similar
  approximation for a new network design problem that captures global
  minimum cut as a special case.  We then show that as we move away
  from global connectivity, even for the single pair case (that is,
  when only one pair $(s,t)$ has positive connectivity requirement),
  this strengthened LP has $\Omega(n)$ integrality gap.  Furthermore,
  in directed graphs, we show that single pair Cap-SNDP is
  $2^{\log^{1-\delta} n}$-hard to approximate for any fixed
  constant $\delta>0$.

  We also consider a variant of the Cap-SNDP in which multiple copies
  of an edge can be bought: we give an $O(\log k)$ approximation for
  this case, where $k$ is the number of vertex pairs with non-zero
  connectivity requirement. This improves upon the previously known
  $O(\min\{k,\log R_{\max}\})$-approximation for this problem when the
  largest minimum-cut requirement, namely $R_{\max}$, is large. On the
  other hand, we observe that the multiple copy version of Cap-SNDP is
  $\Omega(\log \log n)$-hard to approximate even for the single-source
  version of the problem.
\end{abstract}
\thispagestyle{empty}
\newpage
\setcounter{page}{1}
\section{Introduction}
In this paper we consider the {\em capacitated} survivable network
design problem (Cap-SNDP). The input consists of an undirected
$n$-vertex multi-graph $G(V,E)$ and an integer requirement $R_{ij}$
for each unordered pair of nodes $(i,j)$. Each edge $e$ of $G$ has a
cost $c(e)$ and an integer capacity $u(e)$. The goal is to find a
minimum-cost subgraph $H$ of $G$ such that for each pair of nodes
$i,j$ the capacity of the minimum-cut between $i$ and $j$ in $H$ is at
least $R_{ij}$. This generalizes the well-known survivable network
design problem (SNDP) problem in which all edge capacities are
$1$. SNDP already captures as special cases a variety of fundamental
connectivity problems in combinatorial optimization such as the
min-cost spanning tree, min-cost Steiner tree and forest, as well as
min-cost $\lambda$-edge-connected subgraph; each of these problems has been
extensively studied on its own and several of these special cases are
NP-hard and APX-hard to approximate. Jain, in an influential paper
\cite{Jain}, obtained a $2$-approximation for SNDP via the standard
cut-based LP relaxation using the iterated rounding technique.

Although the above mentioned $2$-approximation for SNDP has been known
since 1998, the approximability of Cap-SNDP has essentially been wide
open even in very restricted special cases.  Similar to SNDP, Cap-SNDP
is motivated by both practial and theoretical considerations. These
problems find applications in the design of resilient networks such as
in telecommunication infrastructure. In such networks it is often
quite common to have equipment with different discrete capacities;
this leads naturally to design problems such as Cap-SNDP.  At the
outset, we mention that a different and somewhat related problem is
also referred to by the same name, especially in the operations
research literature. In this version the subgraph $H$ has to support
{\em simultaneously} a flow of $R_{ij}$ between each pair of nodes
$(i,j)$; this is more closely related to multicommodity flows and
buy-at-bulk network design. Our version is more related to
connectivity problems such as SNDP.

As far as we are aware, the version of Cap-SNDP that we study was
introduced (in the approximation algorithms literature) by Goemans
\etal \cite{GG+} in conjunction with their work on SNDP. They made
several observations on Cap-SNDP: (i) Cap-SNDP reduces to SNDP if all
capacities are the same, (ii) there is an $O(\min(m, R_{\max}))$
approximation where $m$ is the number of edges in $G$ and $R_{\max} =
\max_{ij} R_{ij}$ is the maximum requirement, and (iii) if {\em
  multiple} copies of an edge are allowed then there is an $O(\log
R_{\max})$-approximation.  We note that in the capacitated case
$R_{\max}$ can be exponentially large in $n$, the number of nodes of
the graph. Carr \etal \cite{CFLP} observed that the natural cut-based
LP relaxation has an unbounded integrality gap even for the graph
consisting of only two nodes $s,t$ connected by parallel edges with
different capacities. Motivated by this observation and the goal of
obtaining improved approximation ratios for Cap-SNDP, \cite{CFLP}
strengthened the basic cut-based LP by using {\em knapsack-cover}
inequalities. (Several subsequent papers in approximation algorithms
have fruitfully used these inequalities.) Using these inequalities,
\cite{CFLP} obtained a $\beta(G)+1$ approximation for Cap-SNDP where
$\beta(G)$ is the maximum cardinality of a \emph{bond} in the
underlying simple graph: a bond is a minimal set of edges that
separates some pair of vertices with positive demand. Although
$\beta(G)$ could be $\Theta(n^2)$ in general, for certain topologies
--- for instance, if the underlying graph is a line or a cycle ---
this gives constant factor approximations.

The above results naturally lead to several questions.  What is the
approximability of Cap-SNDP? Should we expect a poly-logarithmic
approximation or even a constant factor approximation?  If not, what
are interesting and useful special cases to consider?  And do the
knapsack cover inequalities help in the general case?  What is the
approximability of Cap-SNDP if one allows multiple copies? Does this
relaxed version of the problem allow a constant factor
approximation?

In this paper we obtain several new positive and negative results for 
Cap-SNDP that provide new insights into the questions above.

\iffalse

In this paper, we also study the so called {\em soft capacity} version
of Cap-SNDP.  In this version, each edge $e$ can be bought any number
of times; that is, one can pay cost $k\cdot c(e)$ to get a capacity of
$k\cdot u(e)$, for any positive integer $k$. Such a variant is natural
in some application settings; for example, one may be able to
simultaneously lay down multiple copies of a telecommunications cable
across two nodes.  We will sometimes call the original version of the
problem one with {\em hard} capacities, when we want to make a
distinction between the two versions.
\fi

%\bigskip \noindent {\bf Our Results:}
\subsection{Our Results}
We first discuss results for Cap-SNDP where multiple copies are not
allowed.  We initiate our study by considering the {\em global
  connectivity} version of Cap-SNDP where we want a min-cost subgraph
with global min-cut at least $R$; in other words, there is a
``uniform'' requirement $R_{ij} = R$ for {\em all} pairs $(i,j)$. We
refer to this as the {\em \capR} problem; the special case when all
capacities are unit corresponds to the classical minimum cost
$\lambda$-edge-connected (spanning) subgraph problem, which %Even this special case
is known to be APX-hard \cite{Fer}.  We show the following positive
result for arbitrary capacities.

\begin{theorem}\label{thm:uniform}
  There is a randomized $O(\log n)$-approximation algorithm for the
  \capR problem.  Moreover, for any $\gamma \ge 1$, there is a
  randomized $O(\gamma \log n)$-approximation algorithm with running
  time $n^{O(\gamma)}$ for ``nearly uniform'' Cap-SNDP when all
  pairwise requirements are in $[R, \gamma R]$.
\end{theorem}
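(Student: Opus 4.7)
The plan is to round a natural cut-based LP strengthened with knapsack-cover (KC) inequalities, certify feasibility cut-by-cut with a Chernoff bound, and then control the union bound over cuts using Karger's theorem on near-minimum cuts. I first describe the approach for the uniform $R$ case and then sketch the extension to the nearly-uniform case.

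The LP has $x_e \in [0,1]$, objective $\min\sum_e c(e)\,x_e$, and, for every cut $S \subset V$ and every $A\subseteq\delta(S)$ with $u(A)<R$, the KC inequality
\[
\sum_{e\in\delta(S)\setminus A}\min\bigl(u(e),\,R-u(A)\bigr)\,x_e \;\geq\; R-u(A).
\]
This LP is solved in polynomial time via the ellipsoid method with the standard KC separation oracle, producing $x^*$ of cost at most $\opt$. For a sufficiently large constant $c$, I set $p_e = \min(1,\,c\log n\cdot x^*_e)$ and include each edge $e$ in $H$ independently with probability $p_e$; the expected cost is $O(\log n)\cdot\opt$. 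To argue feasibility, fix a cut $S$ and let $A_S = \{e\in\delta(S): p_e = 1\}$. If $u(A_S)\geq R$ the cut is covered deterministically; otherwise the KC inequality with $A = A_S$ yields $\sum_{e\in\delta(S)\setminus A_S}\min(u(e),R-u(A_S))\,x^*_e \geq R-u(A_S)$, so the expected truncated capacity contributed by the random edges is at least $c\log n\,(R-u(A_S))$. Since each term lies in $[0,R-u(A_S)]$, a multiplicative Chernoff bound gives per-cut failure probability at most $n^{-\Omega(c)}$.

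A naive union bound over the $2^n$ cuts fails, so I invoke Karger's near-minimum-cut theorem. Let $w^*(e) = \min(u(e), R)\cdot x^*_e$; by the KC inequality at $A = \emptyset$ every cut satisfies $w^*(\delta(S)) \geq R$, so the weighted graph $G^* = (V, E, w^*)$ has $\lambda(G^*) \geq R$, and Karger bounds the number of cuts with $w^*$-weight at most $\alpha R$ by $O(n^{2\alpha})$. Stratify cuts into level sets $L_i = \{S : w^*(\delta(S)) \in [2^iR,\,2^{i+1}R)\}$, so $|L_i| \leq n^{O(2^{i+1})}$. For $i\geq 1$ a refined Chernoff bound (the expected truncated capacity at such a cut is $\Omega(c\log n\cdot 2^iR)$ with each term bounded by $R$) gives per-cut failure probability $n^{-\Omega(c\cdot 2^i)}$, while for $i=0$ the $n^{-\Omega(c)}$ bound above suffices. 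Summing yields total infeasibility probability at most $n^{-\Omega(c)} + \sum_{i\geq 1}n^{O(2^{i+1})-\Omega(c\cdot 2^i)} \leq 1/4$ for $c$ large enough. Combined with Markov's inequality on the cost, a single run produces a feasible $O(\log n)$-approximate solution with constant probability, and $O(\log n)$ repetitions boost this to high probability.

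For the nearly-uniform extension with $R_{ij}\in[R,\gamma R]$, I use the analogous KC-LP with per-cut threshold $R_S = \max\{R_{ij}: i\in S,\, j\notin S\} \in [R,\gamma R]$ and round with scaling $\Theta(\gamma\log n)$. Karger applied to the LP-weighted graph truncated at $\gamma R$ shows that only cuts of value at most $O(\gamma R)$ are relevant to the analysis, and these number $n^{O(\gamma)}$, matching both the $n^{O(\gamma)}$ separation/enumeration time and the $O(\gamma\log n)$ scaling needed to drive per-cut failure below $n^{-\Omega(\gamma)}$ in the union bound. The main difficulty is coordinating Chernoff (which requires bounded random contributions, forcing the capacity truncation at $R$) with Karger (which requires an explicit min-cut lower bound on the LP-weighted graph). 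The KC inequality at $A = \emptyset$ supplies exactly that lower bound, while KC at $A = A_S$ absorbs the deterministically-included edges into a residual Chernoff problem of the right scale; the standard cut LP, lacking KC, has an $\Omega(m)$ integrality gap and would preclude any argument of this form.
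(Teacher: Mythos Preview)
Your overall approach matches the paper's: strengthen the cut LP with KC inequalities, threshold-and-sample with an $O(\log n)$ blowup, certify each cut with Chernoff, and control the union bound via Karger's near-min-cut counts, splitting into small cuts (where the KC inequality at $A=A_S$ is invoked) and large cuts (where the original constraint suffices). The rounding analysis and the nearly-uniform extension are essentially the paper's.

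The one genuine gap is the sentence ``This LP is solved in polynomial time via the ellipsoid method with the standard KC separation oracle.'' No polynomial-time separation oracle is known for the full family of KC inequalities here: even for a fixed cut $S$, separating over all $A\subseteq\delta(S)$ is a knapsack-type problem, and on top of that there are exponentially many cuts $S$. The paper explicitly notes this and does \emph{not} solve \KCLP. Instead it finds a ``good'' fractional solution $x$ that (i) satisfies all original cut constraints, (ii) satisfies the KC inequality only for the single edge set $A_x=\{e:x_e\ge 1/(c\log n)\}$ and only for cuts $S$ with $\hat u(\delta(S))\le 2R$, and (iii) has cost at most the \KCLP optimum. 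Once (i) holds, Karger's theorem bounds the number of cuts in (ii) by $n^4$, so they can be enumerated and checked; this is a second, \emph{algorithmic} use of Karger inside the separation step, not just in the final union bound. Because $A_x$ depends on the current iterate, this is a solution-dependent oracle run inside ellipsoid (together with binary search on the objective), and what it returns is weaker than the \KCLP optimum---but it is exactly what your rounding analysis needs, since you only ever invoke the KC inequality at $A=A_S=A_x\cap\delta(S)$ on the small cuts. With this fix, the remainder of your argument goes through and coincides with the paper's proof.
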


To prove Theorem \ref{thm:uniform}, we begin with a natural LP
relaxation for the problem. Almost all positive results previously
obtained for the unit capacity case are based on this relaxation. As
remarked already, this LP has an unbounded integrality gap even for a
graph with two nodes (and hence for \capR). We strengthen the
relaxation by adding the valid knapsack cover inequalities.  Although
we do not know of a polynomial time algorithm to separate over these
inequalities, following \cite{CFLP}, we find a violated inequality
{\em only if} the current fractional solution does not satisfy certain
useful properties. Our main technical tool both for finding a violated
inequality and subsequently rounding the fractional solution is
Karger's theorem on the number of small cuts in undirected graphs
\cite{Karger}.

We believe the approach outlined above may be useful in other network
design applications. As a concrete illustration, we use it to solve an
interesting and natural generalization of \capR, namely, the {\em
  \kway} problem.  The input consists of $(k-1)$ integer requirements
$R_1, \ldots R_{k-1}$, such that $R_1 \le R_2 \le \ldots \le
R_{k-1}$. The goal is to find a minimum-cost subgraph $H$ of $G$ such
that for each $1 \le i \le k-1$, the capacity of any $(i+1)$-way cut
of $G$ is at least $R_i$.\footnote{An $i$-way cut $\C$ of a graph
  $G(V,E)$ is a partition of its vertices into $i$ non-empty sets
  $V_1, \ldots, V_i$; we use $\delta(\C)$ to denote the set of edges
  with endpoints in different sets of the partition $\C$.  The
  \emph{capacity} of an $i$-way cut $\C$ is the total capacity of
  edges in $\delta(\C)$.} It is easy to see that \capR is precisely
the \kway, with $k=2$.  Note that the \kway problem is not a special
case of the general Cap-SNDP as the cut requirements for the former
problem are not expressible as pairwise connectivity constraints.
Interestingly, our techniques for \capR can be naturally extended to
handle the multiway cut requirements, yielding the following
generalization of Theorem~\ref{thm:uniform}.

\begin{theorem}\label{thm:kWay}
  There is a randomized $O(k \log n)$-approximation algorithm for the
  \kway problem with $n^{O(k)}$ running time.
\end{theorem}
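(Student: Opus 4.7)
The plan is to extend the LP-rounding framework used for Theorem~\ref{thm:uniform} (the $k=2$ case) to handle multiway cut requirements. I begin by writing down the natural cut-based LP with variables $x_e \in [0,1]$ for each edge and, for each $i \in \{1,\ldots,k-1\}$ and each $(i+1)$-way partition $\script{C}$ of $V$, the constraint $\sum_{e \in \delta(\script{C})} u(e)\, x_e \ge R_i$. Since this relaxation already has unbounded integrality gap when $k=2$, I strengthen it with knapsack-cover inequalities, one for each pair $(\script{C}, F)$ where $\script{C}$ is an $(i+1)$-way cut and $F \subseteq \delta(\script{C})$ satisfies $u(F) < R_i$: namely, $\sum_{e \in \delta(\script{C}) \setminus F} \min(u(e),\, R_i - u(F))\, x_e \ge R_i - u(F)$. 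This is the natural multi-level analogue of the LP used for \capR.

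Following the strategy of Theorem~\ref{thm:uniform}, I solve this strengthened LP approximately rather than exactly. At each iteration and for every level $i$, I test whether the current fractional $x$ satisfies a structural condition that certifies no KC inequality at level $i$ is appreciably violated; if some level fails the test, the certificate exhibits a violated inequality to add. The central enumeration tool is the Karger--Stein bound on multiway cuts: for each $\alpha \ge 1$ and $i \ge 1$, the number of $(i+1)$-way cuts of capacity at most $\alpha$ times the minimum $(i+1)$-way cut is $n^{O(\alpha\, i)}$, hence at most $n^{O(\alpha k)}$. This both allows the structural tests at all $O(k)$ levels to be performed in $n^{O(k)}$ time and bounds the number of ``dangerous'' small cuts that the subsequent rounding must handle.

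Once I have a near-optimal fractional solution that passes the structural tests at every level, I scale by $\Theta(k \log n)$ and apply independent randomized rounding. For each level $i$ separately, a Chernoff-style concentration bound is applied to every $(i+1)$-way cut whose scaled capacity is within a small constant factor of $R_i$; the capacity truncation built into the KC inequalities provides the per-edge upper bound that concentration requires, exactly as in the $k=2$ case. Since there are only $n^{O(k)}$ such cuts in total across all levels, a union bound succeeds with high probability provided the scaling factor is at least $\Theta(k\log n)$, yielding the claimed $O(k\log n)$ approximation.

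The main obstacle is the dependence on $k$ in both the approximation ratio and the running time. The Karger--Stein count for approximate $(i+1)$-way cuts grows as $n^{O(i)}$, so union-bounding over all levels forces per-cut failure probability $n^{-\Omega(k)}$, which in turn forces the $\Theta(k\log n)$ scaling. I also need to verify that the ``edge-contribution'' structural certificate of the $k=2$ proof extends cleanly to $(i+1)$-way cuts; this reduces to a combinatorial argument about multiway partitions but should go through because the KC inequalities are written per cut and do not interact with the partition size. The $n^{O(k)}$ running time is inherited from enumerating small multiway cuts via Karger--Stein at each level.
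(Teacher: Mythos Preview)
Your proposal is correct and follows essentially the same approach as the paper: write the multi-level LP with knapsack-cover inequalities for each $(i+1)$-way cut, find a ``good'' fractional solution by checking KC inequalities only for the set of nearly integral edges against the polynomially many small $(i+1)$-way cuts (enumerable in $n^{O(k)}$ time via Karger's multiway cut-counting lemma), and then round by selecting nearly integral edges deterministically and highly fractional edges independently with probability $\Theta(k\log n)\cdot x_e$, using Chernoff plus a union bound over the $n^{O(k)}$ relevant cuts at all levels. The paper's write-up makes explicit the split between ``small'' cuts $\script{S}_i$ (handled via the KC inequalities) and ``large'' cuts (handled directly from the original constraints, bucketed by capacity as in Lemma~\ref{lem:largeCuts}), which you allude to but do not spell out; this is a detail rather than a gap.
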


We remark that even for the unit-capacity case of this problem, it is
not clear how to obtain a better ratio than that guaranteed by the
above theorem. We discuss more in Section~\ref{subsec:kway}.

Once the pairwise connectivity requirements are allowed to vary
arbitrarily, the Cap-SNDP problem seems to become distinctly harder.
Surprisingly, the difficulty of the general case starts to manifest
even for the simplest representative problem in this setting, where
there is only one pair $(s,t)$ with $R_{st} > 0$; we refer to this as
the {\em single pair} problem.  The only known positive result for
this seemingly restricted case is a polynomial-factor approximation
that follows from the results in \cite{GG+,CFLP} for general Cap-SNDP.
We give several negative results to suggest that this special case may
capture the essential difficulty of Cap-SNDP.  In particular, we start
by observing that the LP with knapsack cover inequalities has an
$\Omega(n)$ integrality gap even for the single-pair
problem.\footnote{In \cite{CFLP} it is mentioned that there is a
  series-parallel graph instance of Cap-SNDP such that the LP with
  knapsack-cover inequalities has an integrality gap of at least
  $\floor{\beta(G)/2}+1$. However, no example is given; it is not
  clear if the gap applied to a single pair instance or if $\beta(G)$
  could be as large as $n$ in the construction.} Next we show that
the single pair problem is $\Omega(\log \log n)$-hard to approximate.

\begin{theorem}\label{thm:singlePairHardness}
  The single pair Cap-SNDP problem cannot be approximated to a factor
  better than $\Omega(\log \log n)$ unless $NP \subseteq
  DTIME(n^{{\log \log \log n}})$.
\end{theorem}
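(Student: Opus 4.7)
The plan is to prove Theorem~\ref{thm:singlePairHardness} by an approximation-preserving reduction from unweighted \textsc{Set Cover}, invoking Feige's $(1-o(1))\ln N$-inapproximability bound. In fact, the reduction I sketch below yields $\Omega(\log n)$-hardness for single-pair Cap-SNDP under $NP\not\subseteq DTIME(n^{O(\log\log n)})$, of which the stated $\Omega(\log\log n)$-hardness under $NP\not\subseteq DTIME(n^{\log\log\log n})$ is a convenient weakening.

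Given a Set Cover instance with universe $U=\{u_1,\ldots,u_N\}$ and family $\mathcal{S}=\{S_1,\ldots,S_M\}$, I would build an undirected capacitated graph $G$ on the vertex set $\{s,t\}\cup\{y_j:1\le j\le M\}\cup\{w_u:u\in U\}$, with demand $R_{st}=N$ and edges: (i) $(s,y_j)$ of cost $1$ and capacity $|S_j|$ for each $j$; (ii) $(y_j,w_u)$ of cost $0$ and capacity $1$ for each incidence $u\in S_j$; and (iii) $(w_u,t)$ of cost $0$ and capacity $1$ for each $u$. Any sensible subgraph includes all cost-$0$ edges, so the only meaningful choice is which $(s,y_j)$-edges to purchase; this identifies a subfamily $C\subseteq\mathcal{S}$ at total cost $|C|$.

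The key correctness statement I would prove is that the $s$-$t$ max-flow in the chosen subgraph equals $|\bigcup_{j\in C}S_j|$. Every element $u$ covered by some $S_j\in C$ admits a unit-capacity augmenting path $s\to y_j\to w_u\to t$; these paths can be routed edge-disjointly because $(s,y_j)$ has capacity $|S_j|$ (enough slots for all of $S_j$) and the $(y_j,w_u)$-edges are distinct across $u$, while the bottleneck $(w_u,t)$ caps the per-element contribution at $1$, preventing any uncovered $u$ from contributing. Hence the Cap-SNDP min-cut constraint $\ge N$ is equivalent to $C$ being a set cover, and the two optima coincide exactly. Since the graph has $n=O(N+M)$ vertices, Feige's $(1-\epsilon)\ln N$-bound transports verbatim, giving $\Omega(\log n)$-hardness under his assumption, from which the stated form follows.

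The main technical point requiring attention is the soundness direction: verifying via a max-flow/min-cut calculation that \emph{every} $s$-$t$ cut of the chosen subgraph has capacity at least $|\bigcup_{j\in C}S_j|$, so that no exotic cut slicing through the $y_j/w_u$ layer can be satisfied at lower cost than by picking a legitimate set cover. The capacity choices --- $|S_j|$ on $(s,y_j)$ and unit capacities elsewhere --- are calibrated precisely for this, but the cut enumeration must be carried out carefully. Downgrading the natural $(\log n,\,n^{\log\log n})$-hardness obtained from Feige to the exact $(\log\log n,\,n^{\log\log\log n})$-form stated in the theorem is then routine parameter bookkeeping.
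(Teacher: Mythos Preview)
Your reduction has a genuine gap in the soundness direction. The claim that the $s$--$t$ max-flow in the chosen subgraph equals $\bigl|\bigcup_{j\in C}S_j\bigr|$ is false for undirected graphs: because the incidence edges $(y_j,w_u)$ are undirected and always present (they have cost~$0$), flow can ``zig-zag'' through a set vertex $y_j$ with $j\notin C$ and thereby reach element vertices $w_u$ that are \emph{not} covered by $C$. Consequently, a non-cover can satisfy the Cap-SNDP constraint, and the two optima need not coincide.

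Concretely, take $U=\{b,a_1,a_2,a_3,a_4\}$ (so $N=5$) with sets $S_j=\{b,a_j\}$ for $j=1,2,3,4$. Every set cover must contain all four sets, so the Set-Cover optimum is $4$. But choosing $C=\{1,2,3\}$ in your Cap-SNDP instance is already feasible: route $2$ units each through $y_1,y_2$ and $1$ unit through $y_3$; from $y_1$ and $y_2$ push one unit each into $w_b$, send one of those to $t$ and the other along the undirected edge $(w_b,y_4)$ to $w_{a_4}$ and on to $t$. Together with the direct paths through $w_{a_1},w_{a_2},w_{a_3}$ this gives flow $5=N$. Since no two sets yield out-of-$s$ capacity $\ge 5$, the Cap-SNDP optimum is exactly $3<4$. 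So the reduction is not approximation-preserving in the way you claim, and your proposed $\Omega(\log n)$ bound does not follow.

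Your construction would be correct in a \emph{directed} layered graph (arcs $s\to y_j\to w_u\to t$), where zig-zags are impossible; but the theorem is about undirected graphs, and for directed graphs the paper in any case proves a much stronger $2^{\log^{1-\delta}n}$-hardness. The paper obtains the undirected $\Omega(\log\log n)$ bound by a different route: it reduces single-source Fixed-Charge Network Flow (for which Chuzhoy \etal proved $\Omega(\log\log n)$-hardness) to single-pair Cap-SNDP by adding a new super-sink $t^*$ joined to each original terminal $t$ by a zero-cost edge of capacity exactly $R_{tr}$, and setting $R_{s\,t^*}=\sum_t R_{tr}$. These tight per-terminal capacities force every terminal to carry its full demand, which is precisely what prevents the zig-zag shortcuts that break your Set-Cover gadget.
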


The above theorem is a corollary of the results in Chuzhoy \etal's
work on the hardness of related network design problems
\cite{CGNS}. We state it as a theorem to highlight the status of the
problem. (See Appendix~\ref{app:copiesHardness} for a brief proof
sketch.)  We further discuss this connection at the end of this
section. We prove a much stronger negative result for the single pair
problem in {\em directed} graphs. Since in the unit-capacity case,
polynomial-time minimum-cost flow algorithms solve the single-pair
problem exactly even in directed graphs, the hardness result below
shows a stark contrast between the unit-capacity and the non-unit
capacity cases.

\begin{theorem}\label{thm:stHardness}
  In {\em directed} graphs, the single pair Cap-SNDP cannot be approximated to a
  factor better than $2^{\log^{(1-\delta)} n}$ for any $0< \delta <1$,
  unless $NP \subseteq DTIME(n^{{\tt polylog} (n)})$. Moreover, this hardness
  holds for instances in which there are only two distinct edge capacities.
\end{theorem}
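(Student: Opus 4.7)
The plan is to reduce from Min-Rep (a minimization form of Label Cover), whose $2^{\log^{1-\delta'} n}$-inapproximability under $NP \not\subseteq DTIME(n^{\operatorname{polylog}(n)})$ is well known. Recall that a Min-Rep instance specifies a bipartite graph whose two sides are partitioned into super-nodes $L_1,\dots,L_k$ and $R_1,\dots,R_k$, a set $E^{\text{sup}}$ of super-edges, and for each super-edge $e=(L_i,R_j)$ a relation $\pi_e\subseteq L_i\times R_j$. A valid cover is a set $S=S_L\cup S_R$ such that every super-edge $e$ contains some $(a,b)\in\pi_e$ with $a\in S_L\cap L_i$ and $b\in S_R\cap R_j$, and the goal is to minimize $|S|$.

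Given such an instance I would construct a directed graph $G$ with source $s$, sink $t$, and exactly two distinct edge capacities, $1$ and $M:=|E^{\text{sup}}|$. For each vertex $v$ of the Min-Rep instance introduce nodes $v_{\text{in}},v_{\text{out}}$ joined by a \emph{representative edge} $v_{\text{in}}\to v_{\text{out}}$ of capacity $M$ and cost $1$; these are the only edges with nonzero cost. For each super-edge $e$ introduce portals $p_e^L,p_e^R$ and add the following unit-capacity, cost-$0$ edges: $s\to p_e^L$; $p_e^L\to a_{\text{in}}$ for every $a\in L_i$; $a_{\text{out}}\to b_{\text{in}}$ for every $(a,b)\in\pi_e$; $b_{\text{out}}\to p_e^R$; and $p_e^R\to t$. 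The single-pair requirement is $R_{st}=M$.

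For completeness, a Min-Rep cover $S$ of size $K$ yields a feasible subgraph of cost $K$: take every cost-$0$ edge together with the representative edges of vertices in $S$, and for each super-edge $e$ route its one unit of flow through a witnessing $(a^*,b^*)\in\pi_e$ with $a^*\in S_L, b^*\in S_R$; the large capacity $M$ lets a single representative simultaneously serve all super-edges that use it. For soundness, let $(S_L,S_R)$ be the vertices whose representative edges are bought in any solution of cost $K'$. Since $p_e^R$ has a single out-edge of capacity $1$, exactly one unit crosses the $e$-channel; decomposing it into path flows, every such path must traverse $a_{\text{in}}\to a_{\text{out}}$ and $b_{\text{in}}\to b_{\text{out}}$ for some $(a,b)\in\pi_e$, forcing $a\in S_L$ and $b\in S_R$. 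Thus $(S_L,S_R)$ is a valid cover of size $K'$, so the reduction is cost-preserving, and since $|V(G)|$ is polynomial in the Min-Rep instance size the hardness bound transfers (with $\delta$ slightly worse than $\delta'$).

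The main subtlety I expect is the insistence on \emph{only two} distinct capacity values: a naive reduction is tempted to use per-super-node capacities matching the sizes $|L_i|,|R_j|$, or to scale capacities along the channel. The design above avoids this by pushing \emph{all} bottleneck arithmetic into unit-capacity routing edges (at the portals and the $(a,b)$-links), which force every super-edge's unit of flow to commit to a single representative pair, and reserving the single large value $M$ for the representative edges so that a chosen representative can carry flow for arbitrarily many super-edges. Verifying that this two-capacity bottleneck structure leaves no "fractional shortcut" that cheats the Min-Rep cost — in particular, that splitting an $e$-channel's unit of flow across several representatives still charges for each one — is the key step, and it follows immediately from the unit capacity at $p_e^L$ and $p_e^R$ combined with the fact that any positive flow through $a_{\text{in}}\to a_{\text{out}}$ forces that representative edge to be selected.
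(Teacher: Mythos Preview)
Your soundness argument has a genuine gap: flow can ``cross channels'' in your construction, so the bought representatives need not form a Min-Rep cover. Concretely, a unit entering at $p_e^L$ for $e=(L_i,R_j)$ goes to some $a\in L_i$, then along a middle edge $a_{\text{out}}\to b_{\text{in}}$ with $(a,b)\in\pi_{e''}$ where $e''=(L_i,R_{j'})$ is the super-edge determined by $a$'s group and $b$'s group, and finally out through $p_{e'}^R$ for any $e'=(L_{i'},R_{j'})$ incident to $R_{j'}$. Nothing forces $e''=e$ (or $e''=e'$). So the unit ``charged to $e$'' may certify a completely different super-edge $e''$. Aggregating, if $x_{ij}$ denotes the total flow across middle edges in $\pi_{(L_i,R_j)}$, flow conservation only gives $\sum_j x_{ij}=\deg(L_i)$ and $\sum_i x_{ij}=\deg(R_j)$; this is a degree-preserving integer $b$-matching on the super-graph, not necessarily the all-ones assignment. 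Whenever the super-graph contains an even cycle (which it does in the hard instances), one can set two opposite super-edges to~$0$ and the other two to~$2$. The bought set then covers only the super-edges with $x_{ij}>0$, so your claimed cost-preserving equivalence fails.

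The paper avoids this by reducing from the \emph{maximization} form of Label Cover and using a lossy soundness argument rather than an exact one. Its costly edges are the label-choice arcs $a\to a(\ell)$ and $b(\ell)\to b$ (capacities $d_A,d_B$; cross-links $a(\ell_a)\to b(\ell_b)$ are unit-capacity), and soundness proceeds by discarding ``heavy'' vertices (those that bought many labels), showing that at least a constant fraction of edges $(a,b)$ are still certified by some surviving flow path, and then randomly picking one bought label per vertex to get a labeling that satisfies a $1/\rho^2$ fraction of those edges. This loses a $\gamma^{1/4}$ factor in the gap, which is harmless for the $2^{\log^{1-\delta}n}$ bound. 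If you want to salvage your Min-Rep route, you need to prevent channel-crossing structurally (e.g., make the $a_{\text{out}}\to b_{\text{in}}$ link go through a super-edge-specific intermediate node so a path entering at $p_e^L$ can only use middle links in $\pi_e$), but doing that while keeping a single shared cost edge per Min-Rep vertex is exactly the delicate part you glossed over.
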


\paragraph{Allowing Multiple Copies:} Given the negative results above
for even the special case of the single-pair Cap-SNDP, it is natural
to consider the relaxed version of the problem where multiple copies
of an edge can be chosen. Specifically, for any integer $\alpha \ge 0$,
$\alpha$ copies of $e$ can be bought at a cost of $\alpha\cdot c(e)$ to
obtain a capacity $\alpha\cdot u(e)$.  In some applications, such as in
telecommunication networks, this is a reasonable model. As we
discussed, this model was considered by Goemans \etal \cite{GG+} who
gave an $O(\log R_{\max})$ approximation for Cap-SNDP. This follows
from a simple $O(1)$ approximation for the case when all requirements
are in $\{0,R\}$.  The advantage of allowing multiple copies is that
one can group request pairs into classes and separately solve the
problem for each class while losing only the number of classes in the
approximation ratio.  For instance, one easily obtains a
$2$-approximation for the single pair problem even in directed graphs,
in contrast to the difficulty of the problem when multiple copies are
not allowed. Note that this also implies an easy $2k$ approximation
where $k$ is the number of pairs with $R_{ij} > 0$. We address the
approximability of Cap-SNDP with multiple copies of edges allowed.
When $R_{\max}$ is large, we improve the $\min\{2k, O(\log
R_{\max})\}$-approximation discussed above via the following.

\begin{theorem}\label{thm:multipleCopies}
  In undirected graphs, there is an $O(\log k)$-approximation
  algorithm for Cap-SNDP with multiple copies, where $k$ is the number
  of pairs with $R_{ij} > 0$.
\end{theorem}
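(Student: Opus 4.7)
The plan combines an $O(1)$-approximation for the uniform-requirement version of Cap-SNDP with multiple copies with a rank-based partition of the $k$ pairs into $O(\log k)$ groups. For the uniform case (all nonzero $R_{ij}$ equal to a common value $R$), I would reuse the ``grouping'' subroutine underlying the $O(\log R_{\max})$ bound of Goemans \etal: replace each edge $e$ by $\ceil{R/u_e}$ parallel copies, each of capacity $u_e$ and cost $c_e$, then apply Jain's $2$-approximation to the resulting classical SNDP instance. Since with multiple copies one can always afford $\ceil{R/u_e}$ copies of $e$ to deliver $R$ units across any cut crossing $e$, the reduction is lossless up to a constant factor.

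Having the uniform $O(1)$-approximation in hand, sort the pairs so that $R^{(1)} \ge R^{(2)} \ge \cdots \ge R^{(k)}$ and define $G_j = \{\text{pairs of rank } r : 2^{j-1} \le r < 2^j\}$ for $j = 1, 2, \ldots, \ceil{\log_2 k}+1$. For each $G_j$, boost every requirement in $G_j$ up to $R^{(2^{j-1})}$ (the maximum requirement in $G_j$), solve the resulting uniform sub-instance via the black box above, and output the union of the $O(\log k)$ sub-solutions. The use of a rank-based partition, rather than a value-based dyadic partition over $[1, R_{\max}]$, is precisely what replaces a factor of $\log R_{\max}$ with $\log k$.

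The main obstacle is the cost analysis: one must show that the optimum $\opt_j$ of the boosted uniform sub-instance on $G_j$ satisfies $\opt_j = O(\opt)$ for every $j$. The intuition is that $\opt$ is already forced to deliver $R^{(2^{j-1})}$ units of capacity across every cut separating the top pair of $G_j$, and since the remaining pairs in $G_j$ originally had requirements at most $R^{(2^{j-1})}$, this capacity should be reusable (up to a constant factor) to meet the boosted requirements. I would formalize this by moving to the LP relaxation strengthened with knapsack-cover inequalities as in \cite{CFLP}, writing a sub-LP for each group, showing that the $j$-th sub-LP's fractional optimum is $O(\text{LP})$, and then invoking the uniform-case rounding on each sub-LP solution. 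Establishing that the KC-strengthened sub-LPs retain an $O(1)$ integrality gap in the multiple-copy setting, and bounding their fractional optima against the original LP, is where I expect the bulk of the technical work to lie.
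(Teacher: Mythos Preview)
Your rank-based partition has a genuine gap: boosting all requirements in a group $G_j$ up to $R^{(2^{j-1})}$ can increase the cost of that sub-instance by an unbounded factor relative to $\opt$, so the claim $\opt_j = O(\opt)$ is false. Consider $k$ vertex-disjoint edges $(s_i,t_i)$, each of cost $1$ and capacity $u_i = R_i$, with $R_1 \ge \cdots \ge R_k$. Here $\opt = k$ (one copy of each edge). Take any group $G_j$ and set $R_{2^{j-1}} = M$ while $R_i = 1$ for all other $i$ in $G_j$. After boosting, every pair in $G_j$ needs capacity $M$, but for $i > 2^{j-1}$ the only available edge has capacity $1$, so $M$ copies are required. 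Thus $\opt_j \ge (2^{j-1}-1)M$, which is unbounded in $M$ while $\opt = k$. The same blowup occurs at the LP level, so passing to the KC-strengthened relaxation does not help. The flaw in the intuition is that $\opt$ is only obligated to deliver $R^{(2^{j-1})}$ units across cuts separating the \emph{single} top-ranked pair; nothing forces that capacity to be available at the cuts separating the lower-ranked pairs in $G_j$, whose original requirements may be arbitrarily smaller.

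The paper's proof is structurally quite different. It does not partition by requirement at all. Instead it runs a Berman--Coulston--style greedy algorithm: process pairs in decreasing $R_i$, maintain a forest, and for each pair $(s_i,t_i)$ pay the shortest-path cost $\ell_i$ under the metric $c_i(e) = c(e)(1 + R_i/u(e))$ on edges not yet bought, plus extra edges connecting $s_i,t_i$ to nearby existing components. A charging argument shows the total cost is $O\bigl(\sum_i \ell_i\bigr)$. The $O(\log k)$ factor then comes from bucketing pairs by $\class(i) = \lfloor \log \ell_i \rfloor$ (i.e., by \emph{connection cost}, not by requirement or rank), proving that only $O(\log k)$ classes contribute nontrivially, and exhibiting for each class a feasible dual solution---built from disjoint metric balls around one endpoint of each pair in that class---of value $\Omega\bigl(\sum_{i:\class(i)=h} \ell_i\bigr)$. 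The extra ``merge with nearby components'' step in the algorithm is precisely what guarantees these balls can be made disjoint.
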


%The algorithm which yields the above theorem is simple.  In fact the
%solution returned by the algorithm is a forest whose edges are picked
%multiple times.  
Both our algorithm and analysis are inspired by the
$O(\log k)$-competitive online algorithm for the Steiner forest
problem by Berman and Coulston~\cite{BC}, and the subsequent
adaptation of these ideas for the priority Steiner forest problem by
Charikar \etal \cite{CNS}. However, we believe the analysis of our
algorithm is more transparent (although it gets weaker constants) than
the original analysis of \cite{BC}.

We complement our algorithmic result by showing that the multiple copy
version is $\Omega(\log \log n)$-hard to approximate.  This hardness
holds even for the {\em single-source} Cap-SNDP where we are given a
source node $s\in V$, and a set of terminals $T\subseteq V$, such that
$R_{ij} > 0$ iff $i=s$ and $j\in T$.  Observe that single-source
Cap-SNDP is a simultaneous generalization of the classical Steiner
tree problem ($R_{ij} \in \{0,1\}$) as well as both \capR and
single-pair Cap-SNDP.

\begin{theorem}\label{thm:copiesHardness}
  In undirected graphs, single source Cap-SNDP with multiple copies
  cannot be approximated to a factor better than $\Omega(\log \log n)$
  unless $NP \subseteq DTIME(n^{{\log \log \log n}})$.
\end{theorem}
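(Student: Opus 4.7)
The plan is to reduce from the $\Omega(\log\log n)$-hardness of Group Steiner Tree on trees, which, under the stated assumption $NP \not\subseteq DTIME(n^{\log\log\log n})$, follows from the covering/network-design hardness results of Chuzhoy \etal~\cite{CGNS}. The single-pair reduction supporting Theorem~\ref{thm:singlePairHardness} cannot be transplanted here: as observed just before Theorem~\ref{thm:multipleCopies}, the single-pair version with multiple copies is $2$-approximable, so the hardness must genuinely exploit the presence of many terminals sharing a single source.

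Given a GST-on-trees instance---a tree $T=(V,E)$ rooted at $r$, non-negative integer edge costs $c(\cdot)$, and groups $g_1,\ldots,g_k$ of leaves---I would construct a single-source Cap-SNDP instance $G'$ as follows. Keep $T$ with all its costs, and assign every tree edge unit capacity. Introduce new vertices $t_1,\ldots,t_k$ and, for each $i$ and each leaf $\ell \in g_i$, add an edge $(t_i,\ell)$ of cost $0$ and large integer capacity $M$. Let $s := r$, and set $R_{s,t_i} = 1$ for every $i$. The min $s$-$t_i$ cut in the chosen (multi)graph is at least $1$ if and only if $s$ is connected via chosen tree edges to at least one leaf of $g_i$. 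Since the $(t_i,\ell)$ edges are free we may always buy them, and since a single copy of a unit-capacity tree edge already meets the unit requirement, multiple copies offer no benefit. Hence optimal Cap-SNDP solutions correspond exactly to optimal Group Steiner Trees on $T$, and the $\Omega(\log\log n)$ hardness transfers without loss.

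The only remaining point---and in a sense the main obstacle, though a minor one---is to confirm that the specific form of GST-on-trees hardness obtained via \cite{CGNS} holds under precisely the assumption $NP \not\subseteq DTIME(n^{\log\log\log n})$; if \cite{CGNS} states their lower bound for a slightly different covering problem, a short intermediate reduction to GST on trees (or a modification of the construction above that reduces directly from that variant) would fill the gap. Because the reduction itself is exact (no constants lost), I would include only a brief sketch in the appendix, mirroring the treatment of Theorem~\ref{thm:singlePairHardness}.
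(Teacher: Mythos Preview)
Your reduction from Group Steiner Tree on trees is broken, and the flaw is in the sentence ``The min $s$--$t_i$ cut in the chosen (multi)graph is at least $1$ if and only if $s$ is connected via chosen tree edges to at least one leaf of $g_i$.'' The free edges you add from $t_j$ to the leaves of $g_j$ do more than expose $t_j$ to its group: they create zero-cost \emph{shortcuts} between all leaves of $g_j$, and a Cap-SNDP solution is free to route through these shortcuts when connecting \emph{other} terminals. Concretely, take the tree with root $r$ and two leaves $a,b$ with $c(r,a)=1$ and $c(r,b)=100$, and groups $g_1=\{a,b\}$, $g_2=\{b\}$. The GST optimum must buy $r$--$b$ and costs $100$. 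In your Cap-SNDP instance, buying only the tree edge $r$--$a$ (cost $1$) together with all free edges yields the path $r$--$a$--$t_1$--$b$--$t_2$, so both requirements are met at cost $1$. Thus optimal Cap-SNDP solutions do \emph{not} correspond to optimal GST solutions, and no hardness transfers. (A sanity check that something is off: GST on trees is $\Omega(\log^{2-\epsilon} n)$-hard via Halperin--Krauthgamer, so if your exact reduction worked it would contradict the $O(\log k)$ upper bound of Theorem~\ref{thm:multipleCopies}.)

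The paper avoids this pitfall by reducing instead from \emph{Priority Steiner Tree}, which is the problem \cite{CGNS} actually shows to be $\Omega(\log\log n)$-hard. No auxiliary vertices are introduced; the terminals are the original terminals, and an edge of priority $i$ is given capacity $R^i$ while a terminal of priority $i$ is given requirement $R^i$, with $R$ chosen large relative to $m$. The exponential scaling of capacities ensures that meeting a priority-$i$ requirement through lower-priority edges would force so many copies that the cost blows up, so feasible low-cost Cap-SNDP solutions must respect the priority structure. This is the missing idea in your approach: you need the capacities/requirements to \emph{enforce} the combinatorial constraint, because with unit requirements the instance degenerates to ordinary Steiner tree on the augmented graph, which is easy.
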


The above theorem, like Theorem~\ref{thm:singlePairHardness}, also
follows easily from the results of \cite{CGNS}.  For completeness, we
provide a proof of Theorem~\ref{thm:copiesHardness} in
Appendix~\ref{app:copiesHardness}. We note that the hardness reduction 
above creates instances with super-polynomially
large capacities. For such instances, our $O(\log k)$-approximation
strongly improves on the previously known approximation guarantees.

\medskip
\noindent {\bf Related Work:} Network design has a large literature in
a variety of areas including computer science and operations
research. Practical and theoretical considerations have resulted in
numerous models and results. Due to space considerations it is
infeasible even to give a good overview of closely related work. We
briefly mention some work that allows the reader to compare the model
we consider here to related models. As we mentioned earlier, our
version of Cap-SNDP is a direct generalization of SNDP and hence is
concerned with (capacitated) connectivity between request node pairs.
We refer the reader to the survey \cite{KortsarzN} and some recent and
previous papers
\cite{GG+,Jain,FleischerJW,ChuzhoyK08,CKsndp,NutovBifamiliesFOCS} for
pointers to literature on network design for connectivity.  A
different model arises if one wishes to find a min-cost subgraph that
supports multicommodity flow for the request pairs; in this model each
node pair $(i,j)$ needs to routes a flow of $R_{ij}$ in the chosen
graph and these flows simultaneously share the capacity of the graph.
We observe that if multiple copies of an edge are allowed then this
problem is essentially equivalent to the non-uniform buy-at-bulk
network design problem. Buy-at-bulk problems have received substantial
attention; we refer the reader to \cite{CHKS} for several pointers to
this work. If multiple copies are not allowed, the approximability of
this flow version is not well-understood; for example if the flow for
each pair is only allowed to be routed on a single path, then even
checking feasibility of a given subgraph is NP-Hard since the problem
captures the well-known edge-disjoint paths and unsplittable flow
problems. Andrews and Zhang \cite{AZ} have recently considered special
cases of this problem with uniform capacities while allowing some
congestion (that is, a few copies) on the chosen edges.

The \kway problem that we consider does not appear to have 
been considered previously even in the unit-capacity case.

\section{The \capR problem}
\label{sec:uniformReq}

In this section, we prove Theorem \ref{thm:uniform}, giving an $O(\log
n)$-approximation for the \capR problem. Here, we assume each $R_{ij}
= R$; the extension to the case when requirements are ``nearly
uniform'' is deferred to Appendix~\ref{app:nearlyUniform}.  We start
by writing a natural linear program relaxation for the problem; the
integrality gap of this LP can be arbitrarily large. To deal with
this, we introduce additional valid inequalities, called the
\emph{knapsack cover} inequalities, that must be satisfied by any
integral solution. We show how to round this strengthened LP,
obtaining an $O(\log n)$-approximation.
% We make the assumption that $G$ itself is a feasible solution to the
% network design problem; this condition is easy to check.

\subsection{The Standard LP Relaxation and Knapsack-Cover  Inequalities}

We assume without any loss of generality that the capacity of
any edge is at most $R$.
For each subset $S \subseteq 2^V$, we use $\delta(S)$ to denote the
set of edges with exactly one endpoint in $S$. For a set of edges $A$,
we use $u(A)$ to denote $\sum_{e \in A} u(e)$. We say that a set of
edges $A$ satisfies (the cut induced by) $S$ if $u(A \cap \delta(S))
\ge R$. Note that we wish to find the cheapest set of edges which
satisfies every subset $\emptyset \neq S\subset V$. The following is
the LP relaxation of the standard integer program capturing the
problem.  \vspace{-5mm}
\begin{align}
\min ~~~ \sum_{e\in E} c(e)x_e \tag{Std LP} \label{eq:std} \\
\vspace{-2mm}
\forall S\subseteq V, ~~~~~~~~~~~~ \sum_{e\in \delta(S)} u(e)x_e \ge R \notag \\
\vspace{-3mm}
\forall e\in E,~~~~~~~~~~~~~~~~~~~~~ 0 \le x_e \le 1 \notag
\end{align}

%\vspace{-0.25in}
%\begin{center}
%\[ \begin{array}{rcll}
%& &  \hspace{-1in} \stdLP \quad \min \sum_{e \in E} c(e) x_e &  \\
%\\
%\sum_{e \in \delta(S)} u(e) x_e &\geq& R & \left(\forall S \subseteq
%  2^V \right) \\
%  1  \ge x_e & \ge &  0 & \forall e\in E
%\end{array}\]
%\end{center}
\iffalse
Athough \stdLP has exponential size, one can write a polynomial-sized
flow-based relaxation that is essentially equivalent. Unfortunately,
as the example below shows, the LP can have an unbounded integrality
gap.
\fi
The following example shows that \stdLP can have integrality gap as bad as $R$.

\begin{note}{Example 1}
  Consider a graph $G$ on three vertices $p,q,r$. Edge $pq$ has cost
  $0$ and capacity $R$; edge $qr$ has cost $0$ and capacity $R-1$; and
  edge $pr$ has cost $C$ and capacity $R$. To achieve a global min-cut
  of size at least $R$, any integral solution must include edge $pr$,
  and hence must have cost $C$. In contrast, in \stdLP one can set
  $x_{pr} = 1/R$, and obtain a total cost of $C/R$.
\end{note}

In the previous example, any integral solution in which the mincut
separating $r$ from $\{p,q\}$ has size at least $R$ must include edge
$pr$, even if $qr$ is selected. The following valid inequalities are
introduced precisely to enforce this condition. More generally, let
$S$ be a set of vertices, and $A$ be an arbitrary set of edges. Define
$R(S,A) = \max\{0,R - u(A \cap \delta(S))\}$ be the \emph{residual}
requirement of $S$ that must be satisfied by edges in
$\delta(S)\setminus A$. That is, any feasible solution has $\sum_{e
  \in \delta(S)\setminus A} u(e) x_e \ge R(S,A)$. However, any
integral solution also satisfies the following stronger requirement
$$\sum_{e \in \delta(S)\setminus A} \min\{R(S,A), u(e)\} x_e \ge R(S,A)$$
and thus these inequalities can be added to the LP to strengthen it.
These additional inequalities are referred to as \emph{Knapsack-Cover}
inequalities, or simply KC inequalities, and were first used
by~\cite{CFLP} in design of approximation algorithms for Cap-SNDP.

% KC inequalities have been studied extensively in the OR literature
% \cite{Balas,HJP,W}, although Carr et al. \cite{CFLP} were the first
% to use it to design approximation algorithms.  In fact, their result
% of $(\beta(G)+1)$ approximation for Cap-SNDP is obtained via LPs
% strengthened by KC inequalities.

Below, we write a LP relaxation, \KCLP, strengthened with the knapsack
cover inequalities. Note that the original constraints correspond to
KC inequalities with $A = \emptyset$; we simply write them explicitly
for clarity.

\vspace{-5mm}
\begin{align}
\min ~~~ \sum_{e\in E} c(e)x_e \tag{KC LP} \label{eq:kc} \\
\vspace{-2mm}
\forall S\subseteq V, ~~~~~~~~~~~~ \sum_{e\in \delta(S)} u(e)x_e \ge R \tag{Original Constraints} \\
\vspace{-3mm}
\forall A\subseteq E, \forall S\subseteq V, ~~~~~~~~~  \sum_{e\in \delta(S)\setminus A} \min(u(e),R(S,A))x_e \ge R(S,A)
\tag{KC-inequalities} \\
\vspace{-3mm}
\forall e\in E,~~~~~~~~~~~~~~~~~~~~~ 0 \le x_e \le 1 \notag
\end{align}

%\vspace{-0.25in}
%\begin{center}
%  \[ \begin{array}{rclll}
%    & &  \hspace{-1in} \KCLP \quad \min \sum_{e \in E} c(e) x_e &  & \\
%    \\
%    \sum_{e \in \delta(S)} u(e) x_e &\geq& R & \left(\forall S
%      \subseteq
%      2^V \right) & \textrm{[Original Constraints]}\\
%    \sum_{e \in \delta(S) - A} \min\{u(e), R(S,A)\} x_e &\geq& R(S,A)
%    & \left(\forall S \subseteq
%      2^V, A \subseteq E \right) & \textrm{[KC Inequalities]} \\
%    1 \ge x_e & \ge & 0 & \forall e\in E
%\end{array}\]
%\end{center}

\noindent
The Linear Program \KCLP, like the original \stdLP, has exponential
size. However, unlike the \stdLP, we do not know of the existence of
an efficient separation oracle for this. Nevertheless, as we show
below, we do not need to solve \KCLP; it suffices to get to what we
call a {\em good} fractional solution. 

% As we see below, we do not need to solve \KCLP, rather, it will
% suffice if the solution $x$ satisfies only a {\em polynomially many}
% of the KC inequalities along with the original constraints. These
% set of inequalities, however, are not known a priori, and are
% generated by the ellipsoid algorithm. Such a technique was
% introduced by \cite{CFLP}.  We elaborate below, after stating some
% necessary definitions.  we do not attempt to do this for all the
% Knapsack-Cover Inequalities! \textbf{Citation here to \cite{CarrFLP}
%   for similar idea. Question: Is it easy to solve this stronger KP,
%   even approximately? -- Nitish} Instead, we use the ellipsoid
% algorithm and at each step, check all the original constraints and a
% \emph{polynomial number} of Knapsack-Cover inequalities. If any of
% these are violated, we have a separating hyperplane; if not, the
% algorithm terminates. We describe this process more formally below,
% after stating some necessary definitions:

\begin{definition}
  Given a fractional solution $x$, we say an edge $e$ is
  {\em nearly integral} if $x_e \ge \frac{1}{40 \log n}$, and
  we say $e$ is {\em highly fractional} otherwise.
\end{definition}

\begin{definition}
  For any $\alpha \ge 1$, a cut in a graph $G$ with capacities on
  edges, is an \emph{$\alpha$-mincut} if its capacity is within a
  factor $\alpha$ of the minimum cut of $G$.
\end{definition}

\begin{theorem}\label{thm:cutCounting}
  {\rm [Theorems 4.7.6 and 4.7.7 of \cite{Karger}]}
  The number of $\alpha$-mincuts in an $n$-vertex graph is at most
  $n^{2 \alpha}$. Moreover, the set of all $\alpha$-mincuts can be found in
  $O(n^{2 \alpha} \log^2 n)$ time with high probability.
\end{theorem}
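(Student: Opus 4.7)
My plan is to adapt Karger's random-contraction algorithm, which underlies both parts of the statement. The algorithm picks a random edge (with probability proportional to capacity) and contracts it (identifying its endpoints, discarding self-loops but keeping parallel edges), repeating until exactly $k := \lceil 2\alpha \rceil$ meta-vertices remain; it then enumerates all $2^{k-1}-1$ nontrivial bipartitions of these meta-vertices, each of which corresponds to a cut in the original graph.

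Fix any $\alpha$-mincut $C$ of capacity at most $\alpha\lambda$, where $\lambda$ is the global min-cut of the current graph. The key observation is that every vertex has cut capacity at least $\lambda$, so the total edge capacity is at least $n\lambda/2$, and hence a capacity-weighted random edge lies in $C$ with probability at most $(\alpha\lambda)/(n\lambda/2) = 2\alpha/n$. Crucially, this argument is preserved inductively: as long as no edge of $C$ has been contracted so far, both $\lambda$ and the capacity of $C$ are unchanged in the current multigraph, so at the step with $i$ meta-vertices the probability of contracting an edge of $C$ is at most $2\alpha/i$. Therefore $C$ survives down to $k$ meta-vertices with probability at least
\[
\prod_{i=k+1}^{n}\!\left(1-\tfrac{2\alpha}{i}\right) \;=\; \prod_{i=k+1}^{n}\frac{i-2\alpha}{i} \;\ge\; \binom{n}{2\alpha}^{-1},
\]
where $k \ge 2\alpha$ makes every factor nonnegative. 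Conditioned on survival, $C$ is exactly one of the $2^{k-1}-1$ bipartitions enumerated, so the algorithm outputs $C$ with probability at least $\binom{n}{2\alpha}^{-1} \cdot 2^{-(k-1)} = \Omega(n^{-2\alpha})$. Since the events ``output equals $C$'' are pairwise disjoint across distinct cuts, there can be at most $O(n^{2\alpha})$ distinct $\alpha$-mincuts, which gives the counting bound.

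For the algorithmic claim, a naive repetition of the procedure $\Theta(n^{2\alpha}\log n)$ times certainly finds every $\alpha$-mincut with high probability, but a single contraction phase already costs $\Omega(n^2)$, so the total running time would be $n^{2\alpha+\Omega(1)}$. To reach the stated $O(n^{2\alpha}\log^2 n)$ bound I would use the Karger--Stein recursive template: contract the graph only down to $\lceil n\cdot 2^{-1/(2\alpha)}\rceil$ meta-vertices (where each $\alpha$-mincut survives with probability roughly $1/2$), then recurse on two independent copies of the contracted multigraph. The branching factor of two balances exactly against the per-level halving of the survival probability, so each $\alpha$-mincut is produced with constant probability by one top-level call; standard amplification contributes the extra $\log n$. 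The main obstacle is the running-time analysis: one needs to verify that edge contractions can be implemented so that each level costs $O(n^2)$ and then solve the recurrence $T(n) = 2\,T(n\cdot 2^{-1/(2\alpha)}) + O(n^2)$, whose solution is $O(n^{2\alpha}\log n)$ per trial. The counting bound itself is essentially an elementary probabilistic inequality once the contraction algorithm is in hand.
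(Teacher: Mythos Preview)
The paper does not prove this theorem at all: it is simply quoted from Karger's thesis (Theorems~4.7.6 and~4.7.7 of \cite{Karger}) and used as a black box. So there is no ``paper's own proof'' to compare against.

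Your sketch is the standard Karger contraction argument and is essentially correct for the counting bound, and the Karger--Stein recursion you outline is indeed how one obtains the running time. One small expositional slip: you first say the algorithm \emph{enumerates all} $2^{k-1}-1$ bipartitions of the final $k$ meta-vertices, but then invoke disjointness of the events ``output equals $C$'' to bound the number of $\alpha$-mincuts. Those two are inconsistent---if you enumerate all bipartitions, a single run can output many cuts simultaneously and the events are not disjoint. The fix is exactly what your $2^{-(k-1)}$ factor already encodes: after contracting to $k$ vertices, output a \emph{single} uniformly random bipartition; then ``output equals $C$'' really is a disjoint family over distinct $C$, and the probability lower bound $\Omega(n^{-2\alpha})$ yields the $O(n^{2\alpha})$ count. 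For the exact $n^{2\alpha}$ constant (rather than $O(\cdot)$) you need to be slightly more careful with the product $\prod_{i=k+1}^n (i-2\alpha)/i$ when $2\alpha$ is not an integer, but this does not affect how the theorem is used in the paper.
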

\def\u{\hat{u}} \def\S{\script{S}} Given a fractional solution $x$ to
the edges, we let $A_x$ denote the set of nearly integral edges, that
is, $A_x := \{e\in E: x_e \ge \frac{1}{40\log n}\}$.  Define
$\hat{u}(e) = u(e)x_e$ to be the fractional capacity on the edges.
%Let $\S_\alpha$ denote the set of $\alpha$-mincuts in $G$ with capacity $\u$.
Let $\S := \{S\subseteq V: \u(\delta(S)) \le 2R\}$.  A solution $x$
is called {\em good} if it satisfies the following three
conditions:

\begin{enumerate}
\item[(a)] The global mincut in $G$ with capacity $\u$ is at least
  $R$, i.e. $x$ satisfies the original constraints.
\item[(b)] The KC inequalities are satisfied for the set $A_x$ and the
  sets in $\S$. Note that if (a) is satisfied, then by Theorem
  \ref{thm:cutCounting}, $|\S| \le n^4$.
\item[(c)] $\sum_{e\in E} c(e)x_e$ is at most the value of the optimum
  solution to \KCLP.
\end{enumerate}

Note that a good solution need not be {\em feasible} for \KCLP as it is required
to satisfy only a subset of KC-inequalities. We use the ellipsoid method to
get such a solution. Such a method was also used in \cite{CFLP}.

\begin{lemma}\label{lem:ell}
There is a randomized algorithm that computes a good fractional solution with high probability.
\end{lemma}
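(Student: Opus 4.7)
The plan is to invoke the ellipsoid method on an implicit LP defined by the ``goodness'' constraints, using a polynomial-time separation oracle that only ever emits valid KC inequalities. Because every returned cut is valid for any integer feasible solution (hence for any \KCLP-feasible point), the ellipsoid-based LP solver will produce an $x$ whose cost is at most the optimum of \KCLP, simultaneously satisfying (a), (b), and (c).

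The separation oracle, given a fractional $x$, proceeds as follows. First, compute $\u(e) = u(e) x_e$ and the global minimum cut of $G$ under capacities $\u$; if its weight is strictly less than $R$, return the corresponding violated original constraint. Otherwise condition (a) holds, so we enumerate $\S = \{S : \u(\delta(S)) \le 2R\}$ by invoking Karger's algorithm (Theorem~\ref{thm:cutCounting}) to list all $2$-approximate minimum cuts in $O(n^4 \log^2 n)$ time with high probability; any $S \in \S$ has weight $\le 2R \le 2 \cdot (\text{min cut})$ and so appears on the list, and after filtering $|\S| \le n^4$. Compute $A_x = \{e : x_e \ge 1/(40\log n)\}$ and, for each $S \in \S$, test whether the KC inequality for $(A_x, S)$ is violated; if so, return it. Otherwise declare $x$ good. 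The oracle is polynomial, and its completeness in enumerating $\S$ inherits the high-probability guarantee from Karger's theorem.

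Now feed this oracle into the Gr\"otschel--Lov\'asz--Schrijver ellipsoid-based LP solver with objective $\sum_e c(e) x_e$. At every iteration the solver maintains a growing list $L$ of the inequalities returned so far. By construction each inequality in $L$ is valid for \KCLP, so the polytope cut out by $L$ contains the \KCLP-feasible region, and its LP optimum is therefore at most the \KCLP optimum. When the solver terminates it outputs an $x$ that the oracle accepts (so (a) and (b) hold) whose objective value is at most $\min_{x' \in L} c^\top x' \le \text{OPT}(\text{\KCLP})$, giving (c). Termination in polynomial time follows from the standard ellipsoid analysis, since the oracle is polynomial and every emitted constraint has polynomial bit-complexity.

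The main obstacle is that the oracle is \emph{dynamic}: the subfamily of KC inequalities it inspects depends on $x$ through $A_x$ and $\S$, so the set of $x$'s the oracle accepts need not be convex or even closed, and is not a fixed polytope in the usual sense. The resolution is that convexity of the accepted region is not needed; one only needs (i) every emitted inequality to be valid for \KCLP, which secures the cost bound against $\text{OPT}(\text{\KCLP})$, and (ii) the oracle to be polynomial-time and return bounded-bit-length inequalities, which secures polynomial termination. Both properties are immediate from the construction above, and together with Karger's randomized enumeration they yield the desired randomized algorithm that succeeds with high probability.
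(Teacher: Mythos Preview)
Your proposal is correct and follows essentially the same approach as the paper: run the ellipsoid method with a separation oracle that first checks the original cut constraints via a min-cut computation, and once (a) holds enumerates $\S$ via Karger's theorem and tests the KC inequalities for $(A_x,S)$ with $S\in\S$. The only cosmetic difference is that the paper secures condition (c) by an explicit binary search on a guessed optimum value $M$ (adding $\sum_e c(e)x_e\le M$ as a constraint checked by the oracle), whereas you invoke the GLS optimization framework directly; your explicit discussion of why the dynamic, non-convex oracle is still compatible with the ellipsoid method---namely that every emitted inequality is valid for the full \KCLP---is a point the paper leaves implicit.
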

\begin{proof}
  We start by guessing the optimum value $M$ of \KCLP and add the
  constraint $\sum_{e\in E}c(e)x_e \le M$ to the constraints of
  \KCLP. If the guessed value is too small, a good solution may not
  exist; however, a simple binary search suffices to identify the
  smallest feasible value of $M$.  With this constraint in place, we
  will use the ellipsoid method to compute a solution that satisfies
  (a), (b), and (c) with high probability.  Since we do not know of a
  polynomial-time separation oracle for KC inequalities, we will
  simulate a separation oracle that verifies condition (b), a subset
  of KC inequalities, in polynomial time. Specifically, we give a
  randomized polynomial time algorithm such that given a solution $x$
  that violates condition (b), the algorithm detects the violation
  with high probability and outputs a violated KC inequality. We now
  describe the entire process.

  Given a solution $x$ we first check if condition (a) is
  satisfied. This can be done in polynomial time by $O(n)$ max-flow
  computations. If (a) is not satisfied, we have found a violated
  constraint. Once we have a solution that satisfies (a), we know that
  $|\S| \le n^4$. By Theorem~\ref{thm:cutCounting}, the set $\S$ can
  be computed in polynomial-time with high probability.  Thus we can
  check condition (b) in polynomial-time, and with high-probability
  find a violating constraint for (b) if one exists. Once we have a
  solution that satisfies both (a) and (b), we check if $\sum_{e\in
    E}c(e)x_e \le M$. If not, we have once again found a violated
  constraint for input to the ellipsoid algorithm.  Thus in
  polynomially many rounds, where each round runs in polynomial-time,
  the ellipsoid algorithm combined with the simulated separation
  oracle, either returns a solution $x$ that satisfies (a), (b), and
  $\sum_{e\in E}c(e)x_e \le M$, with high probability, or proves that
  the system is infeasible. Using binary search, we find the smallest
  $M$ for which a solution $x$ is returned satisfying conditions (a),
  (b) and $\sum_{e\in E}c(e)x_e \le M$. Since $M$ is less than the
  optimum value of \KCLP, we get that the returned $x$ is a good
  fractional solution with high probability.
\end{proof}

% We now state the separation algorithm for \KCLP: Given a (possibly
% infeasible) solution, let $A$ denote the set of nearly integral
% edges, and let $\script{S}$ denote the set of $2$-minimal
% cuts. First, verify that all original constraints are
% satisfied. Then, for each of the (polynomially many) cuts $S \in
% \script{S}$, check that the Knapsack-Cover inequality corresponding
% to $S$ and edge set $A$ is satisfied. If any of these constraints is
% violated, we have found a separating hyperplane, and can proceed to
% the next iteration of the ellipsoid algorithm. Otherwise --- if all
% such constraints are satisfied --- the algorithm terminates.

% {\bf Note: -- Nitish } Perhaps one should define a solution to be
% ``satisfactory/acceptable'' if it satisfies the original constraints
% and the $2$-minimal KC inequalities? Then one can say that we do
% binary search on the value of OPT and find a satisfactory solution
% of cost no more than OPT, etc. Subsequently, instead of ``solve'',
% we can say ``find a satisfactory solution.''

\subsection{The Rounding and Analysis}

Given a good fractional solution $x$, we now round it to get a $O(\log
n)$ approximation to the \capR problem.  A useful tool for our
analysis is the following Chernoff bound (see \cite{MR}, for
instance):

\begin{lemma}\label{lem:Chernoff}
  Let $X_1, X_2, \ldots X_k$ be a collection of independent random
  variables in $[0,1]$, let $X = \sum_{i=1}^k X_i$, and let $\mu =
  \mathbb{E}[X]$. The probability that $X \le (1-\delta)\mu$ is at
  most $e^{-\mu \delta^2/2}$.
\end{lemma}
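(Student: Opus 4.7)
The plan is to apply the classical Chernoff (exponential moment) technique to the lower tail. For any parameter $t>0$, the event $\{X \le (1-\delta)\mu\}$ is identical to $\{e^{-tX}\ge e^{-t(1-\delta)\mu}\}$, so Markov's inequality yields
$$\Pr[X \le (1-\delta)\mu] \;\le\; e^{t(1-\delta)\mu}\,\mathbb{E}[e^{-tX}].$$
The parameter $t$ will be chosen at the very end to minimize the resulting bound.

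Next, independence of the $X_i$'s factors the moment generating function: $\mathbb{E}[e^{-tX}]=\prod_i \mathbb{E}[e^{-tX_i}]$. The per-variable estimate is the only place where the hypothesis $X_i\in[0,1]$ is used. Because $x\mapsto e^{-tx}$ is convex, the chord on $[0,1]$ gives $e^{-tx} \le 1 - (1-e^{-t})x$; taking expectations and using $1-z\le e^{-z}$ yields $\mathbb{E}[e^{-tX_i}]\le \exp\bigl(-\mu_i(1-e^{-t})\bigr)$, and multiplying over $i$ (with $\mu=\sum_i\mu_i$) gives $\mathbb{E}[e^{-tX}]\le \exp\bigl(-\mu(1-e^{-t})\bigr)$.

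Combining the two bounds reduces everything to minimizing $\exp\bigl(t(1-\delta)\mu - \mu(1-e^{-t})\bigr)$ over $t>0$. A one-line calculus check identifies the optimal $t^{\star}=-\ln(1-\delta)$, and substitution produces the familiar form $\bigl(e^{-\delta}/(1-\delta)^{1-\delta}\bigr)^{\mu}$. To land on the clean stated bound $e^{-\mu\delta^2/2}$, I would finish with the elementary inequality $(1-\delta)\ln(1-\delta)\ge -\delta+\delta^2/2$ for $\delta\in[0,1]$, which is verified by noting equality at $\delta=0$ and showing the derivative of the difference is nonnegative on $[0,1]$ via a Taylor-series comparison.

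The only real place where care is needed is the convexity/chord bound in the per-variable step, since this is where boundedness in $[0,1]$ enters; if that hypothesis were weakened one would need a different estimate (Hoeffding- or Bennett-style). Beyond that, every step is either an independence factorization or routine one-variable calculus, so there is no genuine obstacle—the main ``design choice'' is simply picking the Taylor-comparison as the clean closing inequality that converts the optimized exponent into the tidy form $-\mu\delta^2/2$.
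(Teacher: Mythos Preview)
Your proof is correct and follows the standard exponential-moment (Chernoff) argument. Note, however, that the paper does not actually prove this lemma: it is stated as a known tool with a citation to Motwani and Raghavan's \emph{Randomized Algorithms} textbook, and no proof is given in the paper itself. Your derivation is exactly the textbook argument one would find there, so there is nothing to compare.
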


We start by selecting $A_x$, the set of all nearly integral edges.
Henceforth, we lose the subscript and denote the set as simply $A$.
Let $F = E \setminus A$ denote the set of all highly fractional edges;
for each edge $e \in F$, select it with probability $(40 \log n \cdot
x_e)$. Let $F^* \subseteq F$ denote the set of selected highly
fractional edges. The algorithm returns the set of edges $E_A := A\cup
F^*$.

It is easy to see that the expected cost of this solution $E_A$ is
$O(\log n) \sum_{e\in E}c(e)x_e$, and hence by condition (c) above,
within $O(\log n)$ times that of the optimal integral solution. Thus,
to prove Theorem~\ref{thm:uniform}, it suffices to prove that with high
probability, $E_A$ satisfies every cut in the graph $G$; we devote the
rest of the section to this proof.  We do this by separately
considering cuts of different capacities, where the capacities are
w.r.t $\u$ (recall that $\u(e) = u(e)x_e$). Let ${\mathcal L}$ be the set of
cuts of capacity at least $2R$, that is, ${\mathcal L} := \{S\subseteq
V: \u(\delta(S)) > 2R\}$.

%
%Subsequently, the capacity of a cut refers to its capacity
%in the given fractional solution to \KCLP.
\def\L{{\mathcal L}}
\begin{lemma}\label{lem:largeCuts}
  ${\bf Pr} [~\forall S\in {\mathcal L}: ~u(E_A \cap \delta(S))\ge R]
  ~\ge ~ 1-\frac{1}{2n^{10}}$.
\end{lemma}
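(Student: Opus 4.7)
The plan is to bucket cuts in $\L$ by their $\u$-capacity and argue each bucket is satisfied with high probability. For each integer $i \ge 2$, let $\L_i = \{S \in \L : iR \le \u(\delta(S)) < (i+1)R\}$; since condition (a) of a good solution ensures that the $\u$-weighted global mincut of $G$ is at least $R$, every $S \in \L_i$ is an $(i+1)$-mincut in the $\u$-weighted graph, and Theorem~\ref{thm:cutCounting} gives $|\L_i| \le n^{2(i+1)}$.

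For each $S \in \L_i$ I do a case analysis on the fractional contribution of the nearly integral edges. If $\u(A \cap \delta(S)) \ge R$, then because $x_e \le 1$ I have $u(A \cap \delta(S)) \ge \u(A \cap \delta(S)) \ge R$, so the cut is satisfied deterministically since $A \subseteq E_A$. Otherwise, the highly fractional edges must carry most of the capacity: $\u(F \cap \delta(S)) \ge iR - R = (i-1)R$, and I must argue that the sampled set $F^*$ collects at least $R$ capacity on the cut with high probability.

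To apply the Chernoff bound (Lemma~\ref{lem:Chernoff}) I normalize by $R$: for each $e \in F \cap \delta(S)$ set $Y_e = (u(e)/R) \cdot \mathbf{1}[e \in F^*]$, which lies in $[0,1]$ because we assume $u(e) \le R$ throughout. The $Y_e$'s are mutually independent (edges in $F$ are sampled independently), and $Y = \sum_e Y_e$ has mean $\mu_S = (40 \log n / R) \cdot \u(F \cap \delta(S)) \ge 40 (i-1) \log n$. The bad event $\{u(F^* \cap \delta(S)) < R\}$ is exactly $\{Y < 1\}$, so applying Lemma~\ref{lem:Chernoff} with $1 - \delta = 1/\mu_S$ gives $\Pr[Y < 1] \le \exp(-\mu_S (1 - 1/\mu_S)^2 / 2) \le e \cdot e^{-\mu_S/2}$, which is at most $n^{-c(i-1)}$ for some constant $c \ge 20$ (the constant $40$ in the algorithm is chosen deliberately large enough that $c$ is as big as we need).

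Finally, a union bound within $\L_i$ contributes at most $|\L_i| \cdot n^{-c(i-1)} \le n^{2(i+1) - c(i-1)}$ to the failure probability; with $c = 20$ this is $n^{-18i + 22}$, which is $n^{-14}$ at $i = 2$ and decays geometrically in $i$, so summing over $i \ge 2$ gives total failure probability well below $1/(2n^{10})$. The main obstacle is that $\L$ can contain exponentially many cuts, so a naive union bound is hopeless; the resolution is to pair Karger's polynomial bound on near-mincuts with the tight form of Chernoff (letting $\delta \to 1$ rather than the usual $\delta = 1/2$), which together make the per-cut failure probability shrink fast enough to beat $n^{2(i+1)}$ in each bucket.
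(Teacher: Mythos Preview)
Your proof is correct and follows essentially the same approach as the paper: bucket $\L$ by $\u$-capacity into layers $\L_i$, bound $|\L_i|$ via Karger's cut-counting theorem, apply the Chernoff bound to the normalized random variables $u(e)/R$, and take a union bound within and across buckets. The only cosmetic differences are that the paper splits cases on $u(A\cap\delta(S))\ge R$ rather than $\u(A\cap\delta(S))\ge R$ (both work), and you are more explicit about the choice of $\delta$ in the Chernoff application, obtaining a slightly sharper constant.
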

\begin{proof}
  We partition $\L$ into sets $\L_2,\L_3,\cdots $ where
  $\L_j := \{S\subseteq V: jR < \u(\delta(S)) \le (j+1)R\} .$
  Note that
  Theorem~\ref{thm:cutCounting} implies $|\L_j| \le n^{2(j+1)}$ by
  condition (a) above.
%  We partition the set of non $2$-minimal cuts into those that have
%  capacity at least $j R$, but less than $(j+1)R$, for each $j \in
%  \{2, 3, \ldots \}$.
  Fix $j$, and consider an arbitrary cut $S\in
  \L_j$. %of capacity between $jR$ and $(j+1)R$.
  If $u(A \cap \delta(S)) \ge R$, then $S$ is clearly satisfied by
  $E_A$. Otherwise, since the total $\u$-capacity of $S$ is at least
  $jR$, we have $\u(F\cap \delta(S)) \ge \u(\delta(S)) - u(A\cap
  \delta(S)) \ge (j-1)R$. Thus

  $$\sum_{e\in F\cap \delta(S)} \frac{u(e)}{R}x_e \ge (j-1)$$
  Recall that an edge $e\in F$ is selected in $F^*$ with probability
  $(40\log n\cdot x_e)$. Thus, for the cut $S$, the expected value of
  $\sum_{e\in F^*\cap\delta(S)} \frac{u(e)}{R} \ge 40(j-1)\log
  n$. Since $u(e)/R \le 1$, we can apply Lemma \ref{lem:Chernoff} to
  get that the probability that $S$ is not satisfied is at most
  $e^{-16 \log n (j-1)} = 1/n^{16(j-1)}$.  Applying the union bound, the
  probability that there exists a cut in $\L_j$ not satisfied by $E_A$
  is at most $n^{2(j+1)}/n^{16(j-1)} = n^{18-14j}$. Thus probability that
  some cut in $\L$ is not satisfied is bounded by $\sum_{j \ge 2} n^{18-14j} \le 2n^{-10}$ if $n\ge 2$.
  Hence with probability at least $1 - 1/2n^{10}$,
  $A \cup F^*$ satisfies all cuts in $\L$.
\end{proof}

One might naturally attempt the same approach for the cuts in $\S$
(recall that ${\mathcal S} = \{S\subseteq V:
\u(\delta(S)) \le 2R\}$) modified as follows. Consider any cut $S$, which is partly satisfied
by the nearly integral edges $A$. The fractional edges contribute to
the residual requirement of $S$, and since $x_e$ is scaled up for
fractional edges by a factor of $40 \log n$, one might expect that
$F^*$ satisfies the residual requirement, with the $\log n$ factor
providing a high-probability guarantee. This intuition is correct, but
the KC inequalities are crucial. Consider Example 1; edge $pr$ is
unlikely to be selected, even after scaling. In the statement of
Lemma~\ref{lem:Chernoff}, it is important that each random variable
takes values in $[0,1]$; thus, to use this lemma, we need the expected
capacity from fractional edges to be large compared to the maximum
capacity of an individual edge. But the KC inequalities,
in which edge capacities are ``reduced'', enforce precisely this
condition. Thus we get the following lemma using a similar analysis as above.

\begin{lemma}\label{lem:smallCuts}
  ${\bf Pr} [~\forall S\in {\mathcal S}: ~u(\delta(E_A \cup
  \delta(S)))\ge R] ~\ge ~ 1-\frac{1}{n^{12}}$.
% For each $S\in \S$, the probability that $A \cup F^*$ does not
% satisfy $S$ is at most $1/n^{12}$.
\end{lemma}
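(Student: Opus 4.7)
The plan is to mimic the proof of Lemma~\ref{lem:largeCuts}, but with the crucial twist that we must work with the \emph{residual} requirements of cuts in $\script{S}$ and exploit the KC inequalities of the good solution to cap the effective edge capacities. Fix a cut $S\in\script{S}$. If $u(A\cap\delta(S))\ge R$ the cut is already saturated by the nearly integral edges and there is nothing to show, so assume $R(S,A) = R - u(A\cap\delta(S))>0$. By condition (b) of a good solution, the KC inequality for the pair $(S,A)$ is valid, namely
\[
\sum_{e\in\delta(S)\setminus A}\min\bigl(u(e),R(S,A)\bigr)\,x_e \;\ge\; R(S,A).
\]
Note that $\delta(S)\setminus A = F\cap\delta(S)$, so the above is an inequality over precisely the highly fractional edges crossing $S$.

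For each $e\in F\cap\delta(S)$, let $Y_e = \min(u(e),R(S,A))/R(S,A) \in [0,1]$ if $e$ is selected into $F^*$, and $Y_e=0$ otherwise; set $Y=\sum_e Y_e$. Since each such edge is sampled independently with probability $40\log n\cdot x_e$, the KC inequality gives
\[
\mathbb{E}[Y] \;=\; \frac{40\log n}{R(S,A)}\sum_{e\in F\cap\delta(S)} x_e\min\bigl(u(e),R(S,A)\bigr) \;\ge\; 40\log n.
\]
Because $\min(u(e),R(S,A))\le u(e)$, we have $u(F^*\cap\delta(S))\ge R(S,A)\cdot Y$, so the event $Y\ge 1$ implies that the highly fractional edges contribute at least the residual requirement and $S$ is therefore satisfied by $E_A = A\cup F^*$. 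Now apply Lemma~\ref{lem:Chernoff} with $\mu\ge 40\log n$ and $(1-\delta)\mu=1$: the bound $\Pr[Y<1]\le e^{-\mu\delta^2/2}$ is at most $n^{-\Omega(1)\cdot 20}$, comfortably smaller than, say, $n^{-16}$. Since condition~(a) together with Theorem~\ref{thm:cutCounting} gives $|\script{S}|\le n^4$, the union bound yields failure probability at most $n^{-12}$, as required.

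The main subtlety—and the reason Example~1 motivates the KC inequalities in the first place—is that a naive Chernoff argument against the \emph{un-truncated} residual requirement fails: a single huge-capacity edge (such as $pr$ in Example~1) crossing $S$ with tiny $x_e$ would make the summands of $Y$ leave the interval $[0,1]$, destroying the concentration. The KC inequality tailored to the set $A$ of nearly integral edges is exactly the strengthening that truncates each $u(e)$ down to $R(S,A)$, guaranteeing both that $Y_e\in[0,1]$ and that $\mathbb{E}[Y]$ is still $\Omega(\log n)$. Ensuring that the KC inequality is indeed available for \emph{every} $S\in\script{S}$ (not just feasible in the full LP) is precisely what condition (b) of a good solution provides, and is why Lemma~\ref{lem:ell} is set up to enforce exactly this family of inequalities via the ellipsoid method.
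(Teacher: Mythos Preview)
Your proof is correct and follows essentially the same approach as the paper's: fix $S\in\script{S}$, invoke the KC inequality for the pair $(S,A_x)$ to get truncated capacities $\min(u(e),R(S,A))$ summing to at least $R(S,A)$ over the highly fractional edges, normalize by $R(S,A)$ so that each summand lies in $[0,1]$ with expected total at least $40\log n$, apply the Chernoff bound of Lemma~\ref{lem:Chernoff} to bound the per-cut failure probability by $n^{-16}$, and then union-bound over the at most $n^4$ cuts in $\script{S}$. Your accompanying discussion of why the KC inequalities are essential (the truncation that keeps the random variables in $[0,1]$, with Example~1 as the cautionary case) also mirrors the paper's explanation preceding the lemma.
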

\iffalse
\begin{proof}
  By Theorem~\ref{thm:cutCounting}, the number of cuts in $\S$
  is at most $n^4$; it thus suffices to show that for any $S\in \S$,
  the probability it is not satisified by $E_A$ is at most $n^{-16}$. Assume $S$ is not
  satisfied by $A$, otherwise we are done.

  Since $x$ is good, by condition (b) above, we have $\sum_{e \in
    \delta(S) \cap F} \min \{u(e), R(S,A)\} x_e \ge R(S, A)$. Thus

  $$\sum_{e\in \delta(S)\cap F} \frac{\min \{u(e), R(S,A)\}}{R(S,A)} x_e \ge 1$$
  Once again since the coefficient of $x_e$ is at most $1$, as in the
  proof of Lemma \ref{lem:largeCuts}, we get that the probability $S$
  is not satisfied by $F^*$ is at most $e^{-16 \log n} \le n^{-16}$,
  and we are done.
\end{proof}
\fi
\noindent

The $O(\log n)$-approximation guarantee for the \capR problem stated
in Theorem~\ref{thm:uniform} follows from the previous two lemmas.

\subsection{The \kway Problem}
\label{subsec:kway}
\iffalse
An $i$-way cut $\C$ of a graph $G(V,E)$ is a partition of its vertices
into $i$ sets $V_1, \ldots, V_i$; we use $\delta(\C)$ to denote the
set of edges with endpoints in different sets of the partition $\C$.
The \emph{capacity} of an $i$-way cut $\C$ is the total capacity of
edges in $\delta(\C)$. 

In the \kway problem, the input is a graph $G(V,E)$, with a cost
$c(e)$ and integer capacity $u(e)$ for each edge $e \in E$, along with
$k-1$ integer requirements $R_1, \ldots R_{k-1}$, such that $R_1 \le
R_2 \le \ldots \le R_{k-1}$. The goal is to find a minimum-cost
subgraph $H$ of $G$ such that for each $1 \le i \le k-1$, the capacity
of any $i+1$-way cut of $G$ is at least $R_i$. It is easy to see that
Cap-SNDP with uniform requirements is the special case of \kway with
$k=2$. Our techniques of this section can be naturally extended to
prove the following generalization of Theorem~\ref{thm:uniform}.

\begin{theorem}\label{thm:kWay}
  There is a randomized algorithm with running time $n^{O(k)}$ that
  obtains an $O(k \log n)$-approximation for the \kway problem.
\end{theorem}
\fi

The \kway problem that we define is a natural generalization of the
well-studied min-cost $\lambda$-edge-connected subgraph problem. The latter
problem is motivated by applications to fault-tolerant network design
where any $\lambda-1$ edge failures should not disconnect the graph.
However, there may be situations in which global $\lambda$-connectivity may
be too expensive or infeasible. For example the underlying graph $G$
may have a single cut-edge but we still wish a subgraph that is as
close to $2$-edge-connected as possible. We could model the
requirement by \kway (in the unit-capacity case) by setting $R_1 = 1$
and $R_2 = 3$; that is, at least $3$ edges have to be removed to 
partition the graph into $3$ disconnected pieces.

We briefly sketch the proof of Theorem~\ref{thm:kWay}.
We work with a generalization of \KCLP to $i$-way cuts, with an
original constraint for each $i+1$-way cut, $1 \le i \le k-1$, and
with KC inequalities added. The algorithm is to select all nearly
integral edges $e$ (those with $x_e \ge \frac{1}{40 k \log n}$), and
select each of the remaining (highly fractional) edges $e$ with
probability $40 k \log n \cdot x_e$.  The analysis is very similar to
that of Theorem~\ref{thm:uniform} and hence moved to the
appendix; we use the following lemma on
counting $k$-way cuts in place of Theorem~\ref{thm:cutCounting}. 

%See the appendix for further details.

\begin{lemma}[Lemma 11.2.1 of \cite{Karger}]\label{lem:kWayCutCounting}
  In an $n$-vertex graph, the number of $k$-way cuts with capacity at
  most $\alpha$ times that of a minimum $k$-way cut is at most $n^{2
    \alpha (k-1)}$. 
\end{lemma}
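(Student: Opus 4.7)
The plan is to adapt Karger's random edge-contraction algorithm, which gives the $n^{2\alpha}$ bound in the $k=2$ case, to the $k$-way setting. The algorithm repeatedly picks an edge with probability proportional to its capacity and contracts it, halting when $\tau := \lceil 2\alpha(k-1)\rceil + k$ supernodes remain; it then enumerates the $O(1)$ partitions of these $\tau$ supernodes into $k$ nonempty groups (for fixed $k$ and $\alpha$) and outputs one uniformly at random, thereby producing a $k$-way cut of $G$. Fix any $k$-way cut $\mathcal{C}$ of $G$ with capacity at most $\alpha \cdot c_k^*(G)$, where $c_k^*(G)$ is the minimum $k$-way cut capacity. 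The goal is to lower-bound the probability that a single run outputs exactly $\mathcal{C}$, which is the probability that no edge of $\delta(\mathcal{C})$ is ever contracted and that the final enumeration picks the correct partition.

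The heart of the argument is an averaging bound: in any capacitated graph $H$ on $t\ge k$ vertices with total edge weight $w(H)$, $c_k^*(H)\le \frac{2(k-1)}{t}\,w(H)$. To see this, choose a uniformly random set of $k-1$ vertices to form singleton sides and place the remaining $t-k+1$ vertices into a single $k$-th side; each edge lies in the resulting cut with probability at most $2(k-1)/t$, so some $k$-way cut achieves the claimed capacity. I would apply this bound to the intermediate contracted graph $G_t$ on $t$ supernodes, together with the facts that the capacity of $\delta(\mathcal{C})$ in $G_t$ equals its capacity in $G$ (no edge of $\delta(\mathcal{C})$ has been contracted along a surviving trajectory) and that $c_k^*(G_t)\ge c_k^*(G)$ (every $k$-way cut in $G_t$ lifts to a $k$-way cut of $G$ of equal capacity). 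Combining these gives that the probability the contraction step at $t$ supernodes cuts through $\mathcal{C}$ is at most $\alpha \cdot c_k^*(G_t)/w(G_t) \le 2\alpha(k-1)/t$.

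Multiplying these survival probabilities over $t = n, n-1, \dots, \tau+1$, the probability that $\mathcal{C}$ survives contraction is at least
\[
\prod_{t=\tau+1}^{n}\left(1 - \frac{2\alpha(k-1)}{t}\right) \;=\; \Omega\!\left(n^{-2\alpha(k-1)}\right),
\]
by a standard Gamma-function telescoping (the choice of $\tau$ keeps each factor strictly positive), and the subsequent enumeration then matches $\mathcal{C}$ with further constant probability. Hence each $\alpha$-approximate $k$-way mincut is output with probability at least $\Omega(n^{-2\alpha(k-1)})$ per run; since the events ``algorithm outputs $\mathcal{C}$'' are disjoint across distinct cuts $\mathcal{C}$, summing over all such cuts forces their total to be at most $1$, yielding the claimed bound of $n^{2\alpha(k-1)}$ on the number of $\alpha$-approximate $k$-way mincuts. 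The main obstacle I expect is just the averaging lemma — easy to miscount if one forgets to account for edges with both endpoints chosen as singletons — and correctly selecting the stopping threshold $\tau$ so that the factors in the product stay bounded away from $0$; once those are in place, the remainder is a routine product estimation combined with a counting argument.
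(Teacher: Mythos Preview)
The paper does not prove this lemma at all; it is quoted verbatim from Karger's thesis and used as a black box in the analysis of the \kway algorithm. So there is no ``paper's proof'' to compare against beyond the citation itself.

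Your argument is essentially Karger's original proof: the contraction algorithm, the averaging bound $c_k^*(H)\le \frac{2(k-1)}{t}w(H)$ obtained by making $k-1$ random singletons, the monotonicity $c_k^*(G_t)\ge c_k^*(G)$ under contraction, and the disjoint-event counting at the end. All of these steps are sound. One small slip worth flagging: your survival probability is $\Omega(n^{-2\alpha(k-1)})$ with a hidden constant depending on $\alpha$ and $k$ (coming both from the Gamma-function estimate and from the number of $k$-partitions of the final $\tau$ supernodes), so what you have actually proved is that the number of such cuts is $O_{\alpha,k}(n^{2\alpha(k-1)})$, not literally $n^{2\alpha(k-1)}$. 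To get the clean constant-free bound as stated one needs a slightly sharper bookkeeping (e.g., when $2\alpha(k-1)$ is an integer the product telescopes exactly to $1/\binom{n}{2\alpha(k-1)}$, and the number of $k$-partitions of $2\alpha(k-1)$ points is absorbed by the $(2\alpha(k-1))!$ in the denominator of the binomial). For the paper's purposes this distinction is immaterial, since the lemma is only used inside union bounds with large polynomial slack.
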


It would be interesting to explore algorithms and techniques for other
more general variants of the \kway problem that we consider here.

\section{Single-Pair Cap-SNDP}
\label{sec:single-pair}

In this section we show that the integrality gap with KC inequalities
is $\Omega(n)$ even for single-pair Cap-SNDP in undirected
graphs. Moreover, when the underlying graph is directed, we show that
the single-pair problem is hard to approximate to within a factor of
$2^{\log^{(1-\delta)} n}$ for any $\delta > 0$.

\subsection{Integrality Gap with KC Inequalities}
\label{sec:kcBad}

\iffalse
In this section, we describe a simple example to show that adding
KC-inequalities does not always significantly reduce the integrality
gap of the natural LP for Cap-SNDP. In particular, we examine the
special case of Cap-$s,t$-SNDP.
\fi

We show that for any positive integer $R$, there exists a single-pair
Cap-SNDP instance $G$ with $(R+2)$ vertices such that the integrality
gap of the natural LP relaxation strengthened with KC inequalities is
$\Omega(R)$. The instance $G$ consists of a source vertex $s$, a sink
vertex $t$, and $R$ other vertices $v_1, v_2, \ldots, v_R$.

\parpic[r]{
  \begin{boxedminipage}{0.27\linewidth}
    % \begin{figure}[h]
    %   \begin{center}
    %\iffalse
    \begin{tikzpicture}[xscale=0.92,yscale=1.1]
      \tikzstyle{dot}=[circle,inner sep=1pt,fill=black];
      \tikzstyle{vtx}=[circle,draw,minimum size=7mm,fill=white];
      \tikzstyle{every node}=[font=\footnotesize];
      \node (s) at (-2,0) [vtx] {$s$};
      \node (t) at (2,0) [vtx] {$t$};
      \foreach \y in {-2, 1,2} {
        \draw (s) -- (0,\y) -- (t);
      }
      \foreach \y in {0,-0.5,-1} {
        \node at (0,\y) [dot] {};
      }
      \node at (0,2) [vtx] {$v_1$};
      \node at (0,1) [vtx] {$v_2$};
      \node at (0,-2) [vtx] {$v_{R}$};
      \node at (-1.35,1.25) {(2,1)}; \node at (1.5,1.25) {(R,R)};
      \node at (-1.35,-1.25) {(2,1)}; \node at (1.5,-1.25) {(R,R)};
      \node at (-1,0.1) {(2,1)}; \node at (1,0.1) {(R,R)};
    \end{tikzpicture}
    %\fi
    %\includegraphics[width=43mm]{}
  %\end{center}
  %\caption{An example with integrality gap $\Omega(n)$ for the
  %  strengthened LP. Label $(u,c)$ on an edge denotes a capacity of $u$
  %  and cost of $c$ for that edge.}\label{fig:kcBad}
  %\end{figure}
  \end{boxedminipage}
}

\noindent There is an edge of capacity $2$ and cost $1$ (call these
\emph{small} edges) between $s$ and each $v_i$, and an edge of
capacity $R$ and cost $R$ between each $v_i$ and $t$ (\emph{large}
edges). We have $R_{st} = R$. Clearly, an optimal integral solution
must select at least $R/2$ of the large edges (in addition to small
edges), and hence has cost greater than $R^2/2$.  The instance is
depicted in the accompanying figure: Label $(u,c)$ on an edge denotes
capacity $u$ and cost $c$.

%We strengthen the natural LP with KC inequalities. As in
%the previous section, for an arbitrary set of vertices $S$ and set of
%edges $A$, we use $R(S,A)$ to denote the \emph{residual} requirement
%of $S$ that must be satisfied by edges in $\delta(S) \setminus A$. The
%only interesting cuts are those that separate $s$ from $t$; for each
%such cut, we add the KC inequality
%\[\sum_{e \in \delta(S)\setminus A} \min\{u(e), R(S,A)\} x_e \ge R(S,A). \]

We now describe a feasible LP solution: set $x_e = 1$ on each small
edge $e$, and $x_{e'} = 2/R$ on each large edge $e'$. The cost of this
solution is $R$ from the small edges, and $2R$ from the large edges,
for a total of $3R$. This is a factor of $R/6$ smaller
than the optimal integral solution, proving the desired integrality
gap.

\medskip
It remains only to verify that this is indeed a feasible solution to
\KCLP. Consider the constraint corresponding to sets $S,A$. As edges
in $A \setminus \delta(S)$ play no role, we may assume $A \subseteq
\delta(S)$. If $A$ includes a large edge, or at least $R/2$ small
edges, the residual requirement $R(S,A)$ that must be satisfied by the
remaining edges of $\delta(S)$ is $0$, and so the constraint is
trivially satisfied. Let $A$ consist of $a < R/2$ small edges; the
residual requirement is thus $R-2a$. Let $\delta(S)$ contain $i$ large
edges and thus $R - i$ small edges. Now, the contribution to the left
side of the constraint from small edges in $\delta(S) \setminus A$ is
$2(R - i - a) = (R-2a) + (R - 2i)$. Therefore, the residual
requirement is satisfied by small edges alone unless $i > R/2$.  But
the contribution of large edges is $i \cdot \frac{2}{R} \cdot (R-2a)$
which is greater than $R-2a$ whenever $i > R/2$. Thus, we satisfy each
of the added KC inequalities.

\subsection{Hardness of Approximation in Directed Graphs}
\label{subsec:stHardness}

We now prove Theorem \ref{thm:stHardness} via a reduction from the label
cover problem~\cite{ABSS}.

\begin{definition} [{\em Label Cover Problem}]
  The input consists of a bipartite graph $G(A\cup B, E)$ such that the degree
  of every vertex in $A$ is $d_A$ and degree of every vertex in $B$ is
  $d_B$,  a set of labels $L_A$ and a set of labels $L_B$, and  a
  relation $\pi_{(a,b)} \subseteq L_A\times L_B$ for each
  edge $(a,b)\in E$.  Given a labeling $\phi: A \cup B \to L_A \cup L_B$,
  an edge $e = (a,b)\in E$ is said to be {\em consistent} iff
  $(\phi(a),\phi(b)) \in \pi_{(a,b)}$. The goal is to find a
  labeling that maximizes the fraction of consistent edges.
\end{definition}

The following hardness result for the label-cover problem is a well-known
consequence of the PCP theorem~\cite{ALMSS} and Raz's Parallel Repetition
theorem~\cite{Raz98}.

\begin{theorem}[{\rm \cite{ALMSS,Raz98}}]
\label{thm:labelcover_hard}
For any $\epsilon > 0$, there does not exist a poly-time algorithm
to decide if a given instance of label cover problem
has a labeling where all edges are consistent {\rm (\yes)}, or
if no labeling can make at least $\frac{1}{\gamma}$ fraction
of edges to be consistent for $\gamma = 2^{\log^{1-\epsilon} n}$ {\rm (\no)},
unless $NP\subseteq DTIME(n^{{\tt polylog}(n)})$.
\end{theorem}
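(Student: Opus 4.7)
The plan is to derive this gap-hardness for label cover by chaining together the PCP theorem, the reduction from gap Max-3SAT to a $2$-prover $1$-round game (equivalently, an instance of label cover with constant alphabet and constant soundness), and Raz's parallel repetition theorem; the only nontrivial choice is how aggressively to repeat in parallel so that the final soundness becomes $\gamma^{-1} = 2^{-\log^{1-\epsilon} n}$ while the instance size stays only quasi-polynomial in the original size.

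First I would invoke the PCP theorem of \cite{ALMSS}: there is an absolute constant $\epsilon_0>0$ such that it is \textbf{NP}-hard to distinguish satisfiable instances of 3SAT from instances in which no assignment satisfies a $(1-\epsilon_0)$ fraction of the clauses. Starting from such a gap-3SAT instance $\varphi$ on $N$ variables and $O(N)$ clauses, I build a $2$-prover $1$-round game, presented as a label-cover instance $G_0(A_0 \cup B_0, E_0)$: put one vertex in $A_0$ for each clause and one in $B_0$ for each variable, connect a clause to each of its three variables, let $L_{A_0}$ be the set of the (at most $7$) satisfying assignments to a $3$-clause and $L_{B_0}=\{0,1\}$, and let $\pi_{(a,b)}$ enforce that the label of clause $a$ projected onto variable $b$ equals the label of $b$. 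In the \yes case $G_0$ has a labeling consistent on all edges; in the \no case at most a $(1-\epsilon_0/3)$-fraction of edges can be simultaneously consistent, since any labeling of $B_0$ satisfies at most $(1-\epsilon_0)$ fraction of clauses and each violated clause kills at least one of its three edges. This gives a label-cover instance with constant alphabet, constant soundness $s_0<1$, and constant left/right degree (using standard degree-regularization by replacing each vertex with a bounded-degree expander if needed).

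Next, I would apply Raz's parallel repetition theorem \cite{Raz98}: the $\ell$-fold parallel repetition $G_0^{\otimes \ell}$ is again a label-cover instance, with alphabet sizes $|L_{A_0}|^\ell$ and $|L_{B_0}|^\ell$, vertex sets $A_0^\ell$ and $B_0^\ell$, and soundness at most $s_0^{c\ell}$ for an absolute constant $c>0$ depending only on the original alphabet size and $s_0$. Regularity and the projection structure are preserved by parallel repetition. The \yes case trivially remains a fully consistent labeling (label each coordinate by the original optimal labeling). To reach the target gap $\gamma = 2^{\log^{1-\epsilon} n}$, I pick $\ell = \Theta(\log^{1-\epsilon} N / \log(1/s_0))$, so that the soundness becomes $s_0^{c\ell} \le 2^{-\log^{1-\epsilon} N}$. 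The size of the resulting label-cover instance is $n = N^{O(\ell)} = N^{O(\log^{1-\epsilon} N)} = 2^{O(\log^{2-\epsilon} N)}$, so $N = 2^{\Omega(\log^{1/(2-\epsilon)} n)}$ and $\log^{1-\epsilon} N = \Theta(\log^{(1-\epsilon)/(2-\epsilon)} n)$; reparameterizing $\epsilon$ absorbs this polynomial-in-logs loss and yields the claimed soundness $2^{-\log^{1-\epsilon'} n}$ for an arbitrarily small $\epsilon'>0$.

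Finally, I conclude: a polynomial-time algorithm for the label-cover gap problem on instances of size $n$ would decide $3$SAT on $N$ variables in time $n^{O(1)} = 2^{O(\log^{2-\epsilon} N)} = N^{\mathrm{polylog}(N)}$, placing \textbf{NP} in $\mathrm{DTIME}(n^{\mathrm{polylog}(n)})$. The main obstacle, and the only step that is not a routine packaging of standard reductions, is choosing $\ell$ correctly and tracking how the polynomial blow-up in instance size from parallel repetition interacts with the target soundness; everything else (gap-$3$SAT from PCP, the $2$-prover game construction, bi-regularity, projection structure) is standard and preserved along the way.
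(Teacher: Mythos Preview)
The paper does not give its own proof of this statement; it is simply quoted as a known consequence of the PCP theorem~\cite{ALMSS} and Raz's parallel repetition theorem~\cite{Raz98} and then used as a black box. Your sketch is exactly the standard derivation (gap-3SAT $\to$ regular projection 2P1R game $\to$ parallel repetition), so at the level of approach there is nothing to compare.

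There is, however, a genuine quantitative slip in your parameter setting. With $\ell = \Theta(\log^{1-\epsilon} N)$ you correctly get soundness $2^{-\Theta(\log^{1-\epsilon} N)}$ and instance size $n = 2^{\Theta(\log^{2-\epsilon} N)}$, hence $\log N = \Theta(\log^{1/(2-\epsilon)} n)$ and soundness $2^{-\Theta(\log^{(1-\epsilon)/(2-\epsilon)} n)}$. The exponent $(1-\epsilon)/(2-\epsilon)$ is at most $1/2$ for all $\epsilon\in(0,1)$, so no ``reparameterization of $\epsilon$'' can push it arbitrarily close to $1$; with this choice of $\ell$ you only reach $\gamma = 2^{\log^{1/2 - o(1)} n}$. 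To obtain $\gamma = 2^{\log^{1-\epsilon'} n}$ for an arbitrary $\epsilon'>0$ you must repeat more aggressively: take $\ell = (\log N)^{t}$ with $t \ge (1-\epsilon')/\epsilon'$. Then $\log n = \Theta((\log N)^{t+1})$ while the soundness is $2^{-\Theta((\log N)^t)} = 2^{-\Theta(\log^{t/(t+1)} n)}$, and $t/(t+1)\ge 1-\epsilon'$. The instance size is still $N^{\mathrm{polylog}(N)}$, so the conclusion $NP\subseteq DTIME(n^{\mathrm{polylog}(n)})$ goes through. Apart from this choice of $\ell$, your outline is correct.
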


We now give a reduction from label cover
to the single-pair Cap-SNDP in directed graphs. 
In our reduction, the only non-zero capacity values will be $1$, $d_A$, and $d_B$. 
We note that Theorem~\ref{thm:labelcover_hard} holds even when we restrict 
to instances with $d_A = d_B$. Thus our hardness result will hold on single-pair Cap-SNDP
instances where there are only two distinct non-zero capacity values.

Given an instance $I$
of the label cover problem with $m$ edges, we create in polynomial-time a directed
instance $I'$ of single-pair Cap-SNDP such that if $I$ is a \yes
then $I'$ has a solution of cost at most $2m$, and otherwise,
every solution to $I'$ has cost
$\Omega(m\gamma^{\frac{1}{4}})$. This establishes Theorem \ref{thm:stHardness}
when we choose $\epsilon = \delta/2$.

The underlying graph $G'(V',E')$ for the single-pair Cap-SNDP instance
is constructed as follows.  The set $V'$ contains a vertex $v$ for every
$v \in A \cup B$. We slightly abuse notation and refer to
these sets of vertices in $V'$ as $A$ and $B$ as well.
Furthermore, for every vertex $a \in A$, and for every label $\ell \in
L_A$, the set $V'$ contains a vertex $a(\ell)$. Similarly, for every vertex $b\in
B$, and for every label $\ell\in L_B$, the set $V'$ contains a vertex
$b(\ell)$. Finally, $V'$ contains  a source vertex $s$ and a sink vertex $t$.
The set $E'$ contains the following directed edges:

 \vspace{-3mm}
\begin{itemize}
\item For each vertex $a$ in $A$, there is an edge from $s$ to the
  vertex $a$ of cost $0$ and capacity $d_A$. For each vertex
  $b\in B$, there is an edge from $b$ to $t$ of
  cost $0$ and capacity $d_B$.

\item For each vertex $a\in A$, and for all labels $\ell$ in $L_A$, there is
  an edge from $a$ to $a(\ell)$ of cost $d_A$ and
  capacity $d_A$. For each vertex $b \in B$, and for all labels $\ell$ in $L_B$,
  there is an edge from $b(\ell)$ to $b$ of cost $d_B$
  and capacity $d_B$. These two types of edges are the only edges
  with non-zero cost.

\item For every edge $(a,b) \in E$, and for every pair of labels
  $(\ell_a,\ell_b)\in \pi_{(a,b)}$, there is an edge from $a(\ell_a)$
  to $b(\ell_b)$ of cost $0$ and capacity $1$.
\end{itemize}
\vspace{-2mm}
\noindent
This completes the description of the network $G'$. The requirement
$R_{st}$ between $s$ and $t$ is $m$, the number of edges in the label cover instance.
It is easy to verify that the size of the graph $G'$ is at most quadratic
in the size of the label cover
instance, and that $G'$ can be constructed in polynomial-time.
The lemmas below analyze the cost of \yes and \no instances.

\begin{lemma}\label{lemm:yes}
If the label cover instance is a \yes, then $G'$ contains a subgraph
of cost $2m$ which can realize a flow of value $m$ from $s$ to $t$.
\end{lemma}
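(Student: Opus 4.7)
The plan is to exhibit an explicit subgraph $H$ of $G'$ together with an explicit flow, driven directly by the consistent labeling $\phi:A\cup B\to L_A\cup L_B$ that witnesses the \yes instance. The subgraph $H$ will consist of (i) all zero-cost edges of $G'$ (those incident to $s$ or $t$, and all ``consistent pair'' edges of the form $a(\ell_a)\to b(\ell_b)$ for $(\ell_a,\ell_b)\in \pi_{(a,b)}$), together with (ii) exactly one cost-bearing edge at each vertex of $A\cup B$: for each $a\in A$ we include $a\to a(\phi(a))$, and for each $b\in B$ we include $b(\phi(b))\to b$. Since there are $|A|$ edges of cost $d_A$ on the left and $|B|$ edges of cost $d_B$ on the right, and the label cover instance has $m=|A|d_A=|B|d_B$ edges, the total cost of $H$ is $|A|d_A+|B|d_B=2m$, as required.

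Next I will define an $s$--$t$ flow of value $m$ routed on $H$. For each edge $(a,b)\in E$ of the label cover instance, send one unit of flow along the five-edge path
\[
s\;\longrightarrow\;a\;\longrightarrow\;a(\phi(a))\;\longrightarrow\;b(\phi(b))\;\longrightarrow\;b\;\longrightarrow\;t.
\]
The crucial point is that the middle edge $a(\phi(a))\to b(\phi(b))$ actually exists in $G'$: because $\phi$ makes every edge consistent, $(\phi(a),\phi(b))\in\pi_{(a,b)}$, so this edge is present (with capacity $1$) by the third bullet of the construction. Summing over all $m$ edges gives a total flow of value $m$ from $s$ to $t$.

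Finally I will verify that the capacities on $H$ are respected. The edge $s\to a$ carries one unit per label-cover edge incident to $a$, and there are $d_A$ such edges, so its load is $d_A$, matching its capacity. The edge $a\to a(\phi(a))$ carries the same $d_A$ units, matching its capacity $d_A$. Each edge $a(\phi(a))\to b(\phi(b))$ carries at most one unit per label-cover edge $(a,b)$ that routes through it; to stay within its unit capacity it is enough that distinct label-cover edges use distinct middle edges, which holds because the pair of endpoints $(a,b)$ determines the routed path uniquely. Symmetrically the capacities $d_B$ on $b(\phi(b))\to b$ and on $b\to t$ are saturated exactly. Thus the flow is feasible on $H$, so $H$ certifies the desired statement.

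The argument is essentially a direct construction, so there is no substantive obstacle; the one place care is needed is the existence of the middle edge $a(\phi(a))\to b(\phi(b))$, which relies precisely on the \yes-instance hypothesis that $\phi$ satisfies every relation $\pi_{(a,b)}$.
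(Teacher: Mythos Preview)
Your proof is correct and essentially identical to the paper's: you choose the same subgraph (all zero-cost edges together with the edges $a\to a(\phi(a))$ and $b(\phi(b))\to b$), compute the same cost $|A|d_A+|B|d_B=2m$, and route the same flow, presented as a path decomposition indexed by label-cover edges rather than stage-by-stage. Your explicit capacity check is a bit more detailed than the paper's, but the argument is the same.
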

\begin{proof}Let $\phi$ be any labeling that consistently labels all edges in $G(A \cup B,E)$.
Also, let $E_1 \subseteq E'$ be the set of all edges of cost $0$ in $E'$, and
let $E_2 \subseteq E'$ be  the set of edges $\{(a,a(\phi(a)))~|~
  a\in A\} \cup \{(b(\phi(b)),b): b\in B\}$. We claim that $E_1 \cup E_2$ is a feasible
  solution for the single-pair Cap-SNDP instance.
  Note that the total cost of edges in $E_1 \cup E_2$ is $|A|d_A + |B|d_B = 2m$.
  We now exhibit a flow of value $m$ from $s$ to $t$ in $G''(V',E_1 \cup E_2)$.
   A flow of value $d_A$ is sent along the path $s \to a \to
  a(\phi(a))$ for all $a\in A$.  From $a(\phi(a))$, a unit of flow is sent
  to the $d_A$ vertices of the form $\{b(\phi(b))~|~ b\in B \textrm{
    and } (a,b) \in E\}$; this is feasible because $\phi$  consistently
    labels all edges in $E$.  Thus each vertex of the form $b(\phi(b))$
    where $b \in B$ receives $d_B$
  units of flow, since the degree of $b$ is $d_B$ in $G$.  A flow of
  value $d_B$ is sent to $t$ along the path $b(\phi(b)) \to b \to t$. Thus $s$
  sends out a flow of value $|A|d_A = m$, or equivalently, $t$ receives a flow of value
  $|B|d_B = m$.
\end{proof}

\begin{lemma}\label{lem:no}
If the label cover instance is a \no,
then any subgraph of $G'$ that realizes a flow of $m$ units from $s$ to $t$
has cost $\Omega(m\gamma^{\frac{1}{4}})$.
\end{lemma}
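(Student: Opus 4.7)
The plan is to show that any $s$-$t$ flow of value $m$ in a subgraph $H$ of $G'$ must use many label-edges, because otherwise one could extract, from $H$, a labeling of the label cover instance that satisfies too many edges and contradicts the \no hypothesis.

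\textbf{Setup.} The only edges out of $s$ are the $s\to a$ edges, of total capacity $|A|d_A=m$, so realizing flow $m$ in $H$ saturates every $s\to a$ edge: each $a\in A$ pushes exactly $d_A$ units of flow through $H$, and symmetrically each $b\in B$ absorbs exactly $d_B$. For $a\in A$, let $L_H(a)=\{\ell\in L_A:(a,a(\ell))\in H\}$ and $k_a=|L_H(a)|$, and define $L_H(b),k_b$ analogously. Since flow can only leave $a$ via some $a(\ell)$ (and similarly for $b$), we must have $k_a\ge 1$ for every $a$ and $k_b\ge 1$ for every $b$. All the nonzero-cost edges of $H$ are these label-edges, so the cost of $H$ is $C := d_A\sum_a k_a + d_B\sum_b k_b$.

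\textbf{Bounding flow by consistent pairs.} For $(a,b)\in E$, every flow path through $(a,b)$ uses some unit-capacity edge $a(\ell_a)\to b(\ell_b)$ with $(\ell_a,\ell_b)\in\pi_{(a,b)}$, $\ell_a\in L_H(a)$, $\ell_b\in L_H(b)$, so the flow across $(a,b)$ is at most $p_{ab}:=|L_H(a)\times L_H(b)\cap\pi_{(a,b)}|$. Summing over $E$ gives $\sum_{(a,b)\in E}p_{ab}\ge m$. Now consider a random labeling $\phi$ obtained by picking $\phi(a)$ uniformly from $L_H(a)$ and $\phi(b)$ uniformly from $L_H(b)$; its expected number of consistent edges is $\sum_{(a,b)\in E} p_{ab}/(k_ak_b)$. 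In the \no case every integral labeling satisfies at most $m/\gamma$ edges, so this expectation is also at most $m/\gamma$.

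\textbf{Combining via a Markov restriction.} From $C\ge d_AK_A$ and $C\ge d_BK_B$ the average values satisfy $K_A/|A|\le C/m$ and $K_B/|B|\le C/m$. Let $A'=\{a:k_a\le 4C/m\}$ and $B'=\{b:k_b\le 4C/m\}$; Markov's inequality yields $|A\setminus A'|\le|A|/4$ and $|B\setminus B'|\le|B|/4$. Because every $s\to a$ and $b\to t$ is saturated, the flow originating at $A\setminus A'$ and the flow terminating at $B\setminus B'$ are each at most $m/4$, so the flow crossing $E':=E\cap(A'\times B')$ is at least $m/2$, giving $\sum_{(a,b)\in E'} p_{ab}\ge m/2$. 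For $(a,b)\in E'$ we have $k_ak_b\le 16C^2/m^2$, hence
\[
\frac{m}{\gamma}\;\ge\;\sum_{(a,b)\in E'}\frac{p_{ab}}{k_ak_b}\;\ge\;\frac{m^2}{16C^2}\sum_{(a,b)\in E'}p_{ab}\;\ge\;\frac{m^3}{32C^2},
\]
which rearranges to $C=\Omega(m\sqrt{\gamma})$, strictly stronger than the claimed $\Omega(m\gamma^{1/4})$.

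\textbf{Main obstacle.} The only step needing care is the flow-bookkeeping $\sum_{E'}f_{ab}\ge m/2$; this follows immediately from the saturation of the $s\to a$ and $b\to t$ edges and does not require any additional structure of $H$. Everything else is a clean sequence of the Markov restriction, the random-labeling calculation, and one line of algebra.
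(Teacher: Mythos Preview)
Your proof is correct and, in fact, yields the stronger bound $C=\Omega(m\sqrt{\gamma})$ rather than the paper's $\Omega(m\gamma^{1/4})$. The overall skeleton is the same as the paper's---restrict to ``light'' vertices via a Markov-type cutoff, then analyze a random labeling drawn from the surviving label sets---but your accounting is tighter. The paper fixes a threshold $\rho=\gamma^{1/4}/2$ a priori, deletes heavy vertices, and then (i) pigeonholes the $\Theta(m)$ residual unit flow paths into \emph{distinct} good edges (at most $\rho^2$ paths per edge, so at least $\Omega(m/\rho^2)$ good edges), and (ii) lower-bounds each good edge's consistency probability by $1/\rho^2$, using only that $p_{ab}\ge 1$. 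Steps (i) and (ii) together incur a factor $\rho^4$, forcing $\rho=\Theta(\gamma^{1/4})$. You instead keep the quantity $p_{ab}$ intact and bound $\sum_{E'} p_{ab}/(k_ak_b)$ directly; since the threshold $4C/m$ enters only once (squared) via $k_ak_b\le 16C^2/m^2$, you get $C^2\ge \Omega(\gamma m^2)$. Parametrizing the cutoff by the actual cost $C$ rather than a preset $\rho$ is a cosmetic difference; the real gain is in not collapsing $p_{ab}$ to $1$. One minor point: you use $K_A,K_B$ before defining them; it is clear from context that $K_A=\sum_a k_a$ and $K_B=\sum_b k_b$, but say so explicitly.
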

\begin{proof}
 Let $\rho = \gamma^{1/4}/2$, and $M=32/15$.  Assume by way of
  contradiction, that there exists a subgraph $G''(V',E'')$ of $G'$ of
  cost strictly less than $\frac{\rho m}{M}$ that realizes $m$ units
  of flow from $s$ to $t$. We say a vertex $a\in A$ is {\em light} if
  the number of edges of the form $\{(a, a(\ell))~|~\ell\in L_A\}$ in
  $G''$ is less than $\rho$. Similarly, we say a vertex $b\in B$ is
  {\em light} if the number of edges of the from
  $\{(b(\ell),b)~|~\ell\in L_B\}$ in $G''$ is less than $\rho$. All
  other vertices in $A \cup B$ are referred to as {\em heavy}
  vertices. Note that at most $1/M$ fraction of vertices in $A$ could
  be heavy, for otherwise the total cost of the edges in $E''$ would
  exceed $\frac{|A|}{M} \cdot\rho \cdot d_A = \frac{\rho
    m}{M}$. Similarly, at most $1/M$ fraction of vertices in $B$ could
  be heavy.

  Now fix any integral $s$-$t$ flow $f$ of value $m$ in $G''$; an
  integral flow exists since all capacities are integers. We start by
  deleting from $G''$ all heavy vertices.  Since at most $1/M$
  fraction of either $A$ or $B$ are deleted, the total residual flow
  in this network is at least $(1 - \frac{2}{M})m = \frac{m}{16}$
  (recall that $M=32/15$) since at most $d_A$ units of flow can
  transit through a vertex in $A$, and at most $d_B$ units of flow can
  transit through a vertex in $B$.

  Let $F$ be a decomposition of the residual flow into unit flow
  paths.  Note that $|F| = m/16$.  By construction of $G'$, every flow
  path $f \in F$ is of the form $s \rightarrow a \rightarrow \ell_a
  \rightarrow \ell_b \rightarrow b \rightarrow t$ where the pair
  $(\ell_a, \ell_b) \in \pi_{(a,b)}$.  We say that an edge $(a,b) \in
  E$ is a {\em good} edge if there is a flow path $f$ of the above
  form, and we say $f$ is a certificate for edge $(a,b)$ being good.
  Note that every flow path $f$ is a certificate of exactly one edge
  $(a,b)$.  We claim that there are at least $\frac{m}{16\rho^2}$ good
  edges in $G$.  It suffices to show that for any edge $(a,b) \in E$,
  at most $\rho^2$ flow paths in $F$ can certify that $(a,b)$ as a
  good edge.  Since $a$ and $b$ are both light vertices, we know that
  $|\{ (a, \ell_a) ~|~\ell_a \in L_A \} \cap E''| \le \rho$ and $|\{
  (\ell_b, b) ~|~\ell_b \in L_B \} \cap E''| \le \rho$. Now using the
  fact that each edge $(\ell_a, \ell_b)$ has unit capacity, it follows
  that at most $\rho^2$ paths in $F$ can certify $(a,b)$ as a good
  edge.  Hence number of good edges in $E$ is at least
  $\frac{m}{16\rho^2}$.

  We now show existence of a labeling $\phi$ that makes at least
  $\frac{1}{\gamma}$ fraction of the edges to be consistent,
  contradicting that we were given a \no of label cover. For a vertex
  $a\in A$, let $\Gamma(a) := \{\ell_a \in L_A~|~ (a,\ell_a) \in E''
  \}$.  Similarly, we define $\Gamma(b)$ for each vertex $b \in B$.
  Consider the following random label assignment: each vertex $a \in
  A$ is assigned uniformly at random a label from $\Gamma(a)$, and
  each vertex in $B$ is assigned uniformly at random a label in
  $\Gamma(b)$. For any good edge $(a,b)$, the probability that the
  random labeling makes it consistent is at least $\frac{1}{\rho^2}$
  since $|\Gamma(a)|$ and $|\Gamma(b)|$ are both less than $\rho$ (as
  $a$ and $b$ are light), and there exists an $\ell_a \in \Gamma_A$
  and $\ell_b \in \Gamma_B$ such that $(\ell_a, \ell_b) \in
  \pi_{(a,b)}$.  Thus, in expectation, at least $\frac{1}{\rho^2}$
  fraction of good edges are made consistent by the random assignment.
  Hence there exists a labeling $\phi$ that $\frac{m}{16\rho^4} =
  \frac{m}{\gamma}$ edges in $G$ consistent.
  \end{proof}
Since the graph $G'$ can be constructed from $G$ in poly-time,
it follows that a poly-time $(\gamma^{1/4}/5)$-approximation algorithm
for single-pair Cap-SNDP would give a poly-time algorithm to
decide whether a given instance of label cover is a \yes or a \no.

\section{Cap-SNDP with Multiple Copies Allowed} \label{sec:multipleCopies}

We now consider the version of Cap-SNDP when multiple copies of any
edge $e$ can be chosen; that is, for any integer $\alpha\ge 0$,
$\alpha$ copies of $e$ can be bought at a cost $\alpha\cdot c(e)$ to
obtain a capacity of $\alpha\cdot u(e)$.  Allowing multiple copies
makes the problem easier, and Goemans \etal \cite{GG+} give a $O(\log
R_{max})$ factor approximation algorithm for the problem.  In this
section, we complement this result with a $O(\log k)$ factor
approximation algorithm, where $k$ is the number of $(i,j)$ pairs with
$R_{ij} > 0$.\footnote{Note that we overload the letter `$k$',
  previously used in the definition of the \kway problem; this should
  cause no ambiguity as we discuss only pairwise connectivity
  requirements in this section.} Our algorithm is inspired by the work
of Berman and Coulston \cite{BC} on online Steiner Forest.  For
notational convenience, we rename the pairs $(s_1,t_1), \cdots,
(s_k,t_k)$, and denote the requirement $R_{s_i,t_i}$ as $R_i$; the
vertices $s_i, t_i$ are referred to as \emph{terminals}. We also
assume that the pairs are so ordered that $R_1 \ge R_2 \ge \cdots \ge
R_k$.

We first give an intuitive overview of the algorithm. The algorithm
considers the pairs in decreasing order of requirements, and maintains
a \emph{forest solution} connecting the pairs that have been already
been processed; that is, if we retain a single copy of each edge in
the partial solution constructed so far, we obtain a forest $F$. For
any edge $e$ on the path in $F$ between $s_j$ and $t_j$, the total
capacity of copies of $e$ will be at least $R_j$. When considering
$s_i, t_i$, we connect them as cheaply as possible, assuming that
edges previously selected for $F$ have $0$ cost. (Note that this can
be done since we are processing the pairs in decreasing order of
requirements and for each edge already present in $F$, the capacity of
its copies is at least $R_i$.) The key step of the algorithm is that
\emph{in addition} to connecting $s_i$ and $t_i$, we also connect the
pair to certain other components of $F$ that are ``nearby''.  The cost
of these additional connections can be bounded by the cost of the
direct connection costs between the pairs. These additional
connections are useful in allowing subsequent pairs of terminals to be
connected cheaply. In particular, they allow us to prove a $O(\log k)$
upper bound on the approximation factor.

We now describe the algorithm in more detail. The algorithm maintains
a forest $F$ of edges that have already been bought; $F$ satisfies the
invariant that, after iteration $i-1$, for each $j \le i-1$, $F$
contains a unique path between $s_j$ and $t_j$. In iteration $i$, we
consider the pair $s_i,t_i$.  We define the cost function $c_i(e)$ as
$c_i(e) := 0$ for edges $e$ already in $F$, and $c_i(e) := c(e) +
\frac{R_i}{u(e)}c(e)$, for edges $e\notin F$. Note that for an edge
$e\notin F$, the cost $c_i(e)$ is sufficient to buy enough copies of
$e$ to achieve a total capacity of $R_i$. Thus it suffices to connect
$s_i$ and $t_i$ and pay cost $c_i(e)$ for each edge; in the Cap-SNDP
solution we would pay at most this cost and get a feasible solution.
However, recall that our algorithm also connects $s_i$ and $t_i$ to 
other ``close'' components; to describe this process, we introduce
some notation:

For any vertices $p$ and $q$, we use $d_i(p,q)$ to denote the distance
between $p$ and $q$ according to the metric given by edge costs
$c_i(e)$. We let $\ell_i := d_i(s_i,t_i)$ be the cost required to
connect $s_i$ and $t_i$, given the current solution $F$. We also
define the \emph{class} of a pair $(s_j, t_j)$, and of a component:
\vspace{-2mm}
\begin{itemize}
  \item For each $j \le i$, we say that pair $(s_j, t_j)$ is in
    \emph{class} $h$ if $2^h \le \ell_j < 2^{h+1}$.\\
    Equivalently, $\class(j) = \floor{\log \ell_j}$.

  \item For each connected component $X$ of $F$, $\class(X) =
    \max_{(s_j, t_j) \in X} \class(j)$. 
\end{itemize}

\vspace{-2mm}
\noindent 
Now, the algorithm connects $s_i$ (respectively $t_i$) to component
$X$ if $d_i(s_i, X)$ (resp.~$d_i(t_i, X)$) $\le 2^{\min\{\class(i),
  \class(X)\}}$. That is, if $X$ is close to the pair $(s_i,t_i)$
compared to the classes they are in, we connect $X$ to the pair. As we
show in the analysis, this extra connection cost can be charged to
some pair $(s_j, t_j)$ in the component $X$. The complete algorithm
description is given below.

\iffalse 
We define the algorithm more precisely below.  At every stage of the
algorithm we maintain a forest $F$ in which every pair seen so far is
connected.  In iteration We now divide the pairs into various classes
based on their connection costs.  In particular, we say a pair
$s_i,t_i$ is in {\em class} $h$ if $2^h \le \ell_i < 2^{h+1}$.
Equivalently, $\class(i) := \floor{\log \ell_i}$. For a component $X$
in $F$, we define $\class(X) := \max_{(s_j,t_j)\in X}
\class(j)$. Depending on the classes of the various components in $F$,
we connect $s_i,t_i$ to some additional components. In particular, if
either $d_i(s_i,X) := \min_{v\in X}d_i(s_i,v)$ or $d_i(t_i,X)$ is less
than $2^{(\min(\class(i),\class(X)))}$, then we connect $s_i$ or $t_i$
to $X$.
\fi

\begin{algo}
  \underline{\sc Cap-SNDP-MC}:\\
  $F \leftarrow \emptyset$ \hspace{2.26in}\Comment{$F$ is the forest
    solution returned} \\
  For $i \leftarrow 1$ to $k$ \+ \\
    For each edge $e \in F$, $c_i(e) \leftarrow 0$ \\
    For each edge $e \not \in F$, $c_i(e) \leftarrow c(e) +  (R_i/u(e)) c(e)$ \\
    $\ell_i \leftarrow d_i(s_i, t_i)$ \\
    Add to $F$ a shortest path (of length $\ell_i$) from $s_i$ to
    $t_i$ under distances $c_i(e)$ \\
    $\class(i) \leftarrow \floor{\log \ell_i}$ \\
    For each connected component $X$ of $F$ \+ \\
      If $d_i(s_i, X) \le 2^{\min\{\class(i), \class(X)\}}$ \+ \\
        Add to $F$ a shortest path connecting $s_i$ and $X$ \- \- \\
    For each connected component $X$ of $F$ \+ \\
      If $d_i(t_i, X) \le 2^{\min\{\class(i), \class(X)\}}$ \+ \\
        Add to $F$ a shortest path connecting $t_i$ and $X$ \- \- \\       
    Buy $\ceil{R_i/u_e}$ copies of each edge $e$ added during this iteration. \-
\end{algo}
\iffalse
Note that though the forest $F$ may change several times during a
single iteration $i$ of the outer loop, the costs $c_i(e)$ are fixed
at the beginning of each iteration. Also, the components of $F$ may
change during the final loops; thus, these loops run over the
components that have not been merged with the component containing
$s_i$ and $t_i$.
\fi
We prove that this algorithm {\sc Cap-SNDP-MC} gives an $O(\log k)$
approximation. \\% in Appendix~\ref{app:multipleCopies}.
%, which proves
%Theorem \ref{thm:multipleCopies}.

\noindent
The structure of our proof is as follows: Recall that $\ell_i$ was the
direct connection cost between $s_i$ and $t_i$; in addition to paying
$\ell_i$ to connect these vertices, the algorithm also buys additional
edges connecting $s_i$ and $t_i$ to existing components. We first show
(in Lemma~\ref{lem:ellPays}) that the total cost of extra edges bought
can be charged to the direct connection costs; thus, it suffices to
show that $\sum_i \ell_i \le O(\log k) \opt$, where $\opt$ is the cost
of an optimal solution.  To prove this (Lemma~\ref{lem:lowerBound}),
we bucket the pairs $(s_i, t_i)$ into $O(\log k)$ groups based on
$\class(i)$, and show that in each bucket $h$, $\sum_{i: \class(i) =
  h} \ell_i \le O(\opt)$.

\iffalse
\begin{prop}
  {\sc Cap-SNDP-MC} returns a feasible solution.
\end{prop}
\begin{proof}
  {\bf Entirely trivial; included just for completeness. Might as well
    be removed.-- Nitish} We prove by induction on $i$ that after
  iterations $1$ through $i$, $F$ contains a path between $s_j$ and
  $t_j$ for each $j \le i$. Further, for each edge $e$ added in
  iterations $i$, the total capacity of the copies of $e$
  is at least $R_i$.

  Consider any iteration $i$; as we only add edges to $F$, the
  hypothesis is still satisfied for each pair $s_j, t_j$ with $j <
  i$. Since we add to $F$ a shortest path between $s_i$ and $t_i$, $F$
  clearly contains the desired path. Consider any edge $e$ on this
  path: If it was added in iteration $j < i$, the total capacity of
  its copies is at least $R_j \ge R_i$; if it was added during
  iteration $i$, the total capacity of its copies is
  $\ceil{\frac{R_i}{u(e)}}u(e) \ge R_i$. Thus, the capacity of (the
  copies of) any edge along the path from $s_i$ to $t_i$ is at least
  $R_i$, giving a feasible solution.
\end{proof}
\fi

\begin{lemma}\label{lem:ellPays}
  The total cost of all edges bought by {\sc Cap-SNDP-MC} is at most
  $9 \sum_{i=1}^k \ell_i$.
\end{lemma}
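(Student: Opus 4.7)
The plan is to decompose the algorithm's total cost into direct $s_i$-$t_i$ connection costs $\ell_i$ plus ``extra'' boosting costs, and then amortize the extra costs against the direct costs via a charging argument that assigns each boosting cost to some pair.

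First I would establish a per-iteration cost bound. Since the algorithm purchases $\lceil R_i/u(e)\rceil$ copies of each new edge $e$ at iteration $i$, the cost of those copies satisfies
\[
\lceil R_i/u(e)\rceil c(e) \le c(e)+(R_i/u(e))c(e)=c_i(e).
\]
The newly added edges in iteration $i$ come from (i) the shortest $s_i$-$t_i$ path under $c_i$, whose total $c_i$-length is $\ell_i$, and (ii) for each component $X$ that $s_i$ or $t_i$ is connected to in the boosting loops, a shortest path of $c_i$-length at most $2^{\min\{\class(i),\class(X)\}}$. Summing over iterations, the total algorithmic cost is at most $\sum_i \ell_i + \sum_i E_i$, where $E_i$ is the total $c_i$-cost of the boosting edges in iteration $i$; it thus suffices to show $\sum_i E_i \le 8 \sum_i \ell_i$.

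To bound $\sum_i E_i$ I would introduce the following charging scheme: for a boosting connection at iteration $i$ to component $X$, whose cost is at most $2^{\min\{\class(i),\class(X)\}}$, if $\class(X)<\class(i)$ charge the cost (which is at most $2^{\class(X)}<2^{\class(i)}\le \ell_i$) to pair $i$; if $\class(X)\ge \class(i)$ charge it (at most $2^{\class(X)}\le 2\ell_{j^*(X)}$) to a representative pair $j^*(X)\in X$ of maximum class, breaking ties by smallest index. By construction each individual charge lies within an $O(\ell_m)$ budget of its target pair $m$, so the task reduces to bounding the multiplicity of charges per pair.

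The easy structural fact powering the bound is that once pair $m=j^*(X)$ is charged in an iteration $i$ with $\class(i)>\class(m)$, pair $i$ joins $m$'s component with strictly larger class, so $m$ permanently loses the representative role and cannot be charged again in the $\class(X)\ge\class(i)$ regime thereafter. The main obstacle I anticipate is handling the ``same-class/within-iteration'' multiplicities: pair $i$ can a priori be charged by many lower-class components during its own iteration, and $m$ can remain the representative across a long string of subsequent merges with pairs of class at most $\class(m)$. I expect to resolve this via a Berman-Coulston style amortization exploiting that each boosting step either enlarges the component rooted at $m$ so dramatically that the next boosting-ball around $s_i$ can subsume several of the previously-charged components at once, or else forces a representative change; a potential function of the form $\Phi=\sum_X 2^{\class(X)}$ over non-trivial components of $F$, together with the triangle inequality applied to pairs of components simultaneously close to $s_i$, should absorb the excess and yield an $O(1)$ per-pair multiplicity. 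Combining all pieces gives $\sum_i E_i \le 8\sum_i \ell_i$ and hence total algorithmic cost at most $9\sum_i \ell_i$, as claimed.
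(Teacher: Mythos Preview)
Your decomposition into direct costs $\sum_i\ell_i$ plus boosting costs $\sum_i E_i$, and the per-edge bound $\lceil R_i/u(e)\rceil c(e)\le c_i(e)$, are exactly right and match the paper. The gap is in the charging argument for $\sum_i E_i$.

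First, your ``easy structural fact'' is vacuous where you invoke it. In the regime $\class(X)\ge\class(i)$ you charge to the representative $m=j^*(X)$ of maximum class, so $\class(m)=\class(X)\ge\class(i)$; the hypothesis $\class(i)>\class(m)$ can therefore never hold, and nothing in your scheme stops $m$ from being the representative---and being charged---again and again by later pairs of class $\le\class(m)$. Second, the potential $\Phi=\sum_X 2^{\class(X)}$ is only asserted, not analyzed: when a component $X$ of class $h$ absorbs many pairs of class $\le h$, $\Phi$ does not decrease at all, so it is unclear how it pays for the repeated charges to $j^*(X)$. Likewise, in the $\class(X)<\class(i)$ regime many components of the \emph{same} class $h<\class(i)$ may all be within $2^h$ of $s_i$, each contributing a charge of roughly $2^h$ to pair $i$; you flag this but do not resolve it.

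The paper handles the multiplicity issue by a different, sharper charging. Each boosting connection is classified not by $\class(X)$ or $\class(i)$ but by $h=\lfloor\log(\text{actual connection distance})\rfloor$. For every component and every level $h$ the paper maintains an $h\texttt{-Leader}$; a boosting connection at level $h$ is charged to the $h\texttt{-Leader}$ of $X$ if one exists and to the ordinary leader otherwise, and then the connecting pair $(s_i,t_i)$ \emph{becomes} the $h\texttt{-Leader}$ of the merged component. The upshot is that any pair can be charged at most once as leader and at most once as $h\texttt{-Leader}$ for each $h\le\class(i)$, and the resulting geometric sum gives $8\ell_i$ per pair. This per-level bookkeeping, with the just-merged pair absorbing future level-$h$ charges, is the missing idea in your proposal.
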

\begin{proof}
  Let $F_i$ denote the set of edges added to $F$ during iteration
  $i$. First, note the total cost paid for copies of edge $e \in F_i$
  is $\ceil{\frac{R_i}{u(e)}} c(e) < c(e) + \frac{R_i}{u_e}c(e) =
  c_i(e)$. Thus, it suffices to show:
  $$\sum_{i = 1}^k \sum_{e \in F_i}  c_i(e) \le 9 \sum_{i=1}^k \ell_i$$ 

  We prove that the total cost of the \emph{additional} edges bought
  is at most $8\sum_{i=1}^k \ell_i$; this clearly implies the desired
  inequality. It is \emph{not} true that for each $i$, the total cost
  of additional edges bought during iteration $i$ is at most $8 \ell
  i$.  Nonetheless, a careful charging scheme proves the needed bound
  on total cost.  In iteration $i$, suppose we connect the pair
  $(s_i,t_i)$ to the components $X_1,\ldots,X_r$. We charge the cost
  of connecting $(s_i, t_i)$ and component $X_j$ to the connection
  cost $\ell_j$ of a pair $(s_j,t_j)$ in $X_j$. This is possible since
  we know the additional connection cost is at most
  $2^{\class(X_j)}$. Care is required to ensure no pair is
  overcharged. To do so, we introduce some notation.

  At any point during the execution of the algorithm, for any current
  component $X$ of $F$, we let $\leader(X)$ be a pair $(s_i, t_i) \in
  X$ such that $\class(i) = \class(X)$. For integers $h \le
  \class(X)$, $\hleader(X)$ will denote a pair $(s_j, t_j)$ in $X$; we
  explain how this pair is chosen later. (Initially, $\hleader(X)$ is
  undefined for each component $X$.)

  Now, we have to account for additional edges bought during iteration
  $i$; these are edges on a shortest path connecting $s_i$ (or $t_i$)
  to some other component $X$; we assume w.l.o.g.~that the path is
  from $s_i$ to $X$. Consider any such path $P$ connecting $s_i$ to a
  component $X$; we have $\sum_{e \in P} c_i(e) = d_i(s_i, X) \le
  2^{\min\{\class(i), \class(X)\}}$. Let $h = \floor{\log d_i(s_i,
    X)}$: Charge all edges on this path to $\hleader(X)$ if it is
  defined; otherwise, charge all edges on the path to $\leader(X)$. In
  either case, the pair ($s_i, t_i$) becomes the $\hleader$ of the new
  component just formed. Note that a pair $(s_i,t_i)$ could
  simultaneously be the $h_1$-$\leader$, $h_2$-$\leader$, etc. for a
  component $X$ if $(s_i,t_i)$ connected to many components during
  iteration $i$. However, it can never be the $\hleader$ of a
  component for $h>\class(i)$, and once it has been charged as
  $\hleader$, it is never charged again as $\hleader$.  Also observe
  that if a pair is in a component $X$ whose $\hleader$ is defined,
  subsequently, it always stays in a component in which the $\hleader$
  is defined.

  \medskip
  For any $i$, we claim that the total charge to pair $(s_i, t_i)$ is
  at most $8 \ell_i$, which completes the proof. Consider any such
  pair: any charges to the pair occur when it is either $\leader$ or
  $\hleader$ of its current component. First, consider charges to
  $(s_i, t_i)$ as $\leader$ of a component. Such a charge can only
  occur when connecting some $s_j$ (or $t_j$) to $X$. Furthermore, if
  $h = \floor{\log d_j(s_j, X)} \le \class(X) = \class(i)$, the
  $\hleader(X)$ must be \emph{currently undefined}, for otherwise the
  $\hleader(X)$ would have been charged.  Subsequently, the $\hleader$
  of the component containing $(s_i,t_i)$ is always defined, and so
  $(s_i, t_i)$ will never again be charged as a $\leader(X)$ by a path
  of length in $[2^h, 2^{h+1})$. Therefore, the total charge to $(s_i,
  t_i)$ as $\leader$ of a component is at most $\sum_{h =
    1}^{\class(i)} 2^{h+1} < 2^{\class(i) + 2} \le 4 \ell_i$.

  Finally, consider charges to $(s_i, t_i)$ as $\hleader$ of a
  component. As observed above, $h \le \class(i)$.  Also for a fixed
  $h$, a pair is charged at most once as $\hleader$. Since the total
  cost charged to $(s_i, t_i)$ as $\hleader$ is at most $2^{h+1}$;
  summing over all $h \le \class(i)$, the total charge is less than
  $2^{\class(i) + 2} = 4 \ell_i$.

  Thus, the total charge to $(s_i, t_i)$ is at most $4 \ell_i + 4
  \ell_i = 8 \ell_i$, completing the proof.
  \end{proof}

\begin{lemma}\label{lem:lowerBound}
  If $\opt$ denotes the cost of an optimal solution to the instance of
  Cap-SNDP with multiple copies, then $\sum_{i=1}^k
    \ell_i \le 64 (\ceil{\log k} + 1) \opt$.
\end{lemma}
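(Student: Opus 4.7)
The plan is to bucket the pairs by class: let $T_h := \{i : \class(i) = h\}$, so that $\ell_i \in [2^h, 2^{h+1})$ for each $i \in T_h$ and $\sum_{i \in T_h} \ell_i < |T_h|\cdot 2^{h+1}$. It therefore suffices to bound $|T_h|\cdot 2^h$ in each class and combine these bounds carefully. The central technical claim I will establish is a per-class bound: for every class $h$,
\[
|T_h| \cdot 2^h \;\le\; O(\opt).
\]

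To prove the per-class bound I will use the algorithm's explicit separation property. For any two pairs $i < j$ both in $T_h$, when pair $j$ is processed, $s_i$ and $t_i$ belong to some component $X$ of $F$ of class $\ge h$; since the algorithm did not merge $(s_j,t_j)$ with $X$, we must have $d_j(s_j,X) > 2^h$ and $d_j(t_j,X) > 2^h$. I will combine this with a charge to the optimal multigraph $\opt^*$ of total cost $\opt$. For each $i \in T_h$, $\opt^*$ contains an $s_i$-$t_i$ path of capacity $\ge R_i$; since $c_i(e) \le 2 \lceil R_i/u(e)\rceil c(e)$ on any edge with $u(e) \le R_i$ (and edges with $u(e) > R_i$ can be preprocessed down to capacity $R_i$), the $\opt^*$-cost of this path is at least $\ell_i/2 \ge 2^{h-1}$. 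A ball-growing argument then produces, for each $i \in T_h$, a subgraph $B_i \subseteq \opt^*$ of cost $\Omega(2^h)$ attached to either $s_i$ or $t_i$. The separation property, transferred to the $\opt^*$-metric (via the observation that paying $c_j$ for each copy of each $\opt^*$-edge yields a fractional $s_j$-$t_j$ connection), ensures that these balls can be chosen edge-disjoint across distinct class-$h$ pairs. Summing gives $|T_h|\cdot\Omega(2^h) \le \sum_{i \in T_h} \mathrm{cost}(B_i) \le \opt$, as desired.

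To get the $O(\log k)$ factor globally, I combine the per-class bound with the trivial bounds $|T_h| \le k$ and $\ell_i \le 2\opt$ (again from the existence of an $\opt^*$-path, which has $c_i$-cost at most $2\opt$). The second bound restricts non-empty classes to $h \le \log(2\opt)$. Splitting the sum at $h^* := \lceil \log(\opt/k)\rceil$, the low classes contribute
\[
\sum_{h \le h^*} |T_h| \cdot 2^{h+1} \;\le\; \sum_{h \le h^*} k \cdot 2^{h+1} \;=\; O(\opt)
\]
by geometric summation, while the high classes contribute at most $O(\log k) \cdot O(\opt)$, because there are only $\log(2\opt) - h^* + 1 = O(\log k)$ values of $h$ in the high range, each bounded by $O(\opt)$ using the per-class estimate.

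The main obstacle is the edge-disjoint ball argument: the separation property is expressed in the time-dependent metric $c_j$, whereas disjointness is needed in the fixed $\opt^*$-metric. I expect to bridge this by an assignment step analogous to the leader/$h$-leader bookkeeping of Lemma~\ref{lem:ellPays}, choosing for each $i \in T_h$ which of $s_i$ or $t_i$ anchors the ball $B_i$, so that two overlapping balls $B_i, B_j$ with $i < j$ would force the existence of a cheap $c_j$-path from $s_j$ (or $t_j$) into the component $X$ containing $(s_i,t_i)$, contradicting the separation property. Once the per-class bound is formalized, the final constant $64(\lceil \log k\rceil+1)$ follows from tracking the factors of $2$ through the geometric sums above.
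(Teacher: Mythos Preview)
Your high-level structure --- bucket by class, prove a per-class bound $\sum_{i\in T_h}\ell_i = O(\opt)$, then observe that only $O(\log k)$ classes can contribute significantly --- is exactly the skeleton of the paper's proof (their Sub-Claims~1 and~2). Your threshold-splitting at $h^*=\lceil\log(\opt/k)\rceil$ is a fine alternative to the paper's Sub-Claim~1, and your bound $\ell_i\le 2\opt$ is correct via an averaging over a flow decomposition of $\opt^*$.

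The genuine gap is in the per-class bound. Your ``separation property'' is false as stated: for two pairs $i<j$ in $T_h$, there is no reason the algorithm did \emph{not} merge $(s_j,t_j)$ with the component $X$ containing $(s_i,t_i)$. The algorithm merges whenever $d_j(s_j,X)\le 2^h$ or $d_j(t_j,X)\le 2^h$, and nothing prevents this. So you cannot conclude that all class-$h$ terminals are pairwise $2^h$-far, and your ball-disjointness argument collapses. The paper's fix is not an ``assignment step'' but a counting argument: process pairs of $T_h$ in order and try to place a radius-$2^h/4$ ball (in the $d_i$-metric) at $s_i$ or $t_i$; if \emph{both} attempts fail, one shows that $s_i$ is within $2^h/2$ of some earlier ball center $s_j$ and $t_i$ is within $2^h/2$ of some earlier $t_\ell$, which (since $d_i(s_i,t_i)\ge 2^h$) forces $s_j,t_\ell$ to lie in distinct components that the algorithm then merges during iteration $i$. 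Thus each failed placement \emph{strictly decreases} the number of components containing class-$h$ pairs, so at least half of the pairs succeed. This yields $|T_h|/2$ disjoint balls rather than $|T_h|$.

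A secondary difference: the paper converts the disjoint balls into a lower bound on $\opt$ via an explicit feasible dual to the natural flow LP (setting $\beta_{i,e}=c(e)/u_i(e)$ for edges in ball $i$ and $\alpha_i=2^h/(8R_i)$), rather than by directly carving edge-disjoint pieces out of $\opt^*$. Your direct-charging plan runs into the metric mismatch you flag, and it is not clear your ``cost-$\Omega(2^h)$ subgraph of $\opt^*$'' exists per pair --- the LP-duality route sidesteps this, since disjointness of the $d_i$-balls (as vertex sets) is exactly what makes the dual constraint $\sum_i u_i(e)\beta_{i,e}\le c(e)$ hold.
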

\begin{proof}
  Let $C_h$ denote $\sum_{i: \class(i) = h} \ell_i$. Clearly,
  $\sum_{i=1}^k \ell_i = \sum_h C_h$. 
  The lemma follows from the two sub-claims below:

  \noindent \textbf{Sub-Claim 1: }
  $\sum_h C_h \le (2 (\ceil{\log k}+1)) \cdot \max_h C_h$

  \noindent \textbf{Sub-Claim 2: }
  For each $h$, $C_h \le 32 \opt$.

  \begin{proofof}{Sub-Claim 1}
    Let $h' = \max_i \class(i)$. We have $C_{h'} \ge 2^{h'}$, and for
    any terminal $i$ such that $\class(i) \le h' - (\ceil{\log k}+1)$,
    we have $\ell_i \le \frac{2^{h'+1}}{2k}$. Thus, the total
    contribution from such classes is at most $\frac{2^{h'}}{k} \cdot
    k = 2^{h'}$, and hence:
    \vspace{-5mm}
    \begin{eqnarray*}
      \sum_{h = h'-\ceil{\log k}}^{h'} C_h &\ge& \frac{\sum_h C_h}{2}
      \textrm{, which implies}\\
      \max_{h' - \ceil{\log k} \le h \le h'} C_h &\ge&
      \frac{\sum_h C_h}{2 (\ceil{\log k} + 1)}.\qedhere 
    \end{eqnarray*}
  \end{proofof}

  \bigskip \def\ball{{\tt ball}}

  It remains to show Sub-Claim 2, that for each $h$, $C_h \le 32
  \opt$.  Fix $h$.  Let $\S_h$ denote the set of pairs $s_i,t_i$ such
  that $\class(i) = h$.  Our proof will go via the natural primal and
  dual relaxations for the Cap-SNDP problem. In particular, we will
  exhibit a solution to the dual relaxation of cost $C_h/32$. To do so
  we will require the following claim.  Define $\ball(s_i, r)$, a {\em
    ball} of radius $r$ around $s_i$ as containing the set of vertices
  $v$ such that $d_i(s_i,v) \le r$ and the set of edges $e = uv$ such
  that $d_i(s_i,\{u,v\}) + c_i(e) \le r$. An edge $e$ is
  \emph{partially} within the ball if $d_i(s_i, \{u,v\}) < r <
  d_i(s_i, \{u,v\}) + c_i(e)$. Subsequently, we assume for ease of
  exposition that no edges are partially contained within the balls we
  consider; this can be achieved by subdividing the edges as
  necessary. Similarly, we define $\ball(t_i, r)$, the ball of radius
  $r$ around $t_i$. Two balls are said to be \emph{disjoint} if they
  contain no common vertices.
  
  \begin{claim}\label{claim:balls}
    There exists a subset of pairs, $\S'_h\subseteq \S_h$, $|\S'_h| \ge
    |\S_h|/2$, and a collection of $|\S'_h|$ disjoint balls of radius
    $2^h/4$ centred around either $s_i$ or $t_i$, for every pair
    $(s_i,t_i)\in \S'_h$.
  \end{claim}
  
  We prove this claim later; we now use it to complete the proof of
  Sub-Claim 2.  First we describe the LP.  Let the variable $x_e$
  denote whether or not edge $e$ is in the Cap-SNDP solution. Let
  $\P_i$ be the set of paths from $s_i$ to $t_i$. For each $P \in
  \P_i$, variable $f_P$ denotes how much flow $t$ sends to the root
  along path $P$. We use $u_i(e)$ to refer to $\min\{R_i, u(e)\}$, the
  effective capacity of edge $e$ for pair $(s_i, t_i)$.

  \begin{tabular}{c|l}
    \begin{minipage}[htb]{0.45\linewidth}
      \begin{eqnarray*}
        \mbox{\bf Primal} && \hspace{-0.3in}\min \sum_{e \in E} c_e x_e  \\
        \sum_{P \in \P_i}f_{P} & \ge & R_i \quad \quad ~~~ \left(
          \forall i \in [k] \right) \\
        \sum_{P \in \P_t| e \in P} f_{P} & \le & u_i(e) x_e \quad  \left(
          \forall i \in [k], e \in E \right)\\
        x_e, f_{P} & \ge & 0\\
      \end{eqnarray*}
    \end{minipage}
    &
    \begin{minipage}[htb]{0.45\linewidth}
      \begin{eqnarray*}
        \mbox{\bf Dual} && \hspace{-0.3in}\max \sum_{t \in T} R_i \alpha_i \\
        \sum_i u_i(e) \beta_{i,e} &\le &c_e \quad \quad \quad
        ~ \left( \forall e  \in E \right) \\
        \alpha_i &\le & \sum_{e \in P} \beta_{i,e} \quad  
        \left(\forall i \in [k], P \in \P_i \right)\\
        \alpha_i, \beta_{i,e} &\ge & 0\\
      \end{eqnarray*}
    \end{minipage}
  \end{tabular}

  We now describe a feasible dual solution of value at least $C_h/32$
  using Claim \ref{claim:balls}. 
  % Consider the $|\S'_h| \ge |\S_h|/2$ disjoint balls.
  For $(s_i,t_i) \in \S'_h$, if there is a ball $B$ around $s_i$ (or
  equivalently $t_i$), we define $\beta_{i,e} = c(e)/u_i(e)$ for each
  edge in the ball. Since the balls are disjoint, the first inequality
  of the dual is clearly satisfied.  Set $\alpha_i = 2^h/8R_i$. For
  any path $P\in \P_i$, we have
  \[ \sum_{e \in P} \beta_{i,e} 
  =  \frac{1}{R_i} \sum_{e \in P \cap B} \frac{R_i c(e)}{u_i(e)} 
  \ge \frac{1}{2R_i} \sum_{e \in P \cap B} \frac{R_i c(e)}{u(e)} + c(e)
  \ge \frac{1}{2R_i} \sum_{e \in P \cap B} c_i(e)
  \ge \frac{1}{2R_i} \frac{2^h}{4} = \alpha_i \]

  %$$ 2 \sum_{e \in P} \beta_{i,e} =  \sum_{e \in P \cap B} 2 \frac{R_i c(e)}{u_i(e)}  \ge 
  %\sum_{e \in P \cap B} \frac{R_i c(e)}{u(e)} + c(e) \ge \sum_{e \in P
  %  \cap B} c_i(e) \ge \frac{2^h}{4}$$
  \noindent
  where the first inequality used $u_i(e) \le R_i$, the second follows
  from the definition of $c_i(e)$, and the last inequality follows
  from the definition of $\ball(s_i,2^h/4)$. Thus, $\alpha_i
  =2^h/8R_i$ is feasible along with these $\beta_{i,e}$'s. This gives
  a total dual value of

  $$\frac{2^h}{8}\cdot |\S'_h| \ge \frac{2^h}{16}\cdot |\S_h| \ge \frac{1}{32} \sum_{i\in \S_h} \ell_i = \frac{C_h}{32}$$
  where the last inequality follows from the fact that $\class(i) =
  h$. This proves the lemma modulo Claim \ref{claim:balls}, which we
  now prove.

  \begin{proofof}{Claim \ref{claim:balls}}
    We process the pairs in $\S_h$ in the order they are processed by
    the original algorithm and grow the balls.  We abuse notation and
    suppose these pairs are $(s_1,t_1),\ldots,(s_p,t_p)$. We maintain
    a collection of disjoint balls of radius $r=2^h/4$, initially
    empty.

    At stage $i$, we try to grow a ball of radius $r$ around either
    $s_i$ or $t_i$. If this is not possible, the ball around $s_i$
    intersects that around some previous terminal in $\S'_h$, say
    $s_j$; similarly, the ball around $t_i$ intersects that of a
    previous terminal, say $t_\ell$.  Let $v$ be a vertex in
    $\ball(s_i, r)$ and $\ball(s_j, r)$. We have $d_i(s_i, s_j) \le
    d_i(s_i,v) + d_i(v, s_j) \le d_i(s_i, v) + d_j(v, s_j) <
    2^h/2$. (The second inequality follows because for any $j < i$ and
    any edge $e$, $c_i(e) \le c_j(e)$.) Similarly, we have $d_i(t_i,
    t_\ell) < 2^h/2$.

    Now, we observe that $s_j$ and $t_\ell$ could not have been in the
    same component of $F$ at the beginning of iteration $i$ of {\sc
      Cap-SNDP-MC}; otherwise $d_i(s_i,t_i) \le d_i(s_i,s_j) +
    d_i(t_i,t_\ell) <2^h$, contradicting that $\class(i) =h$. But
    since $d_i(s_i, s_j) \le 2^h/2$ and $\class(i) = \class(j) = h$, we
    connect $s_i$ to the component of $s_j$ during iteration $i$;
    likewise, we connect $t_i$ to the component of $t_\ell$ during
    this iteration. Hence, at the end of the iteration, $s_i, t_i,
    s_j, t_\ell$ are all in the same component. As a result, the
    number of components of $F$ containing pairs of $\S_h$
    \emph{decreases} by at least one during the iteration.

    It is now easy to complete the proof: During any iteration of $F$
    corresponding to a pair $(s_i, t_i) \in \S_h$, the number of
    components of $F$ containing pairs of $\S_h$ can go up by at most
    one. Say that an iteration \emph{succeeds} if we can grow a ball
    of radius $r$ around either $s_i$ or $t_i$, and \emph{fails}
    otherwise.  During any iteration that fails, the number of
    components decreases by at least one; as the number of components
    is always non-negative, the number of iterations which fail is no
    more than the number which succeed. That is, $|\S'_h| \ge |\S_h -
    \S'_h|$.
  \end{proofof}
\end{proof}

Theorem~\ref{thm:multipleCopies} is now a straightforward consequence
of Lemmas~\ref{lem:ellPays} and \ref{lem:lowerBound}:

\begin{proofof}{Theorem~\ref{thm:multipleCopies}}
  The total cost of edges bought by the algorithm is at most
  $\sum_{i=1}^k \sum_{e \in F_i} c_i(e) \le 9 \sum_{i=1}^k \ell_i$, by
  Lemma~\ref{lem:ellPays}. But $\sum_{i=1}^k \ell_i \le 64(\ceil{\log
    k} + 1) \opt$, by Lemma~\ref{lem:lowerBound}, and hence the total
  cost paid by {\sc Cap-SNDP-MC} is at most $O(\log k) \opt$. 
\end{proofof}

\section{Conclusions} \label{sec:conclusions} 

In this paper we made progress on addressing the approximability of
Cap-SNDP. We gave an $O(\log n)$ approximation for the \capR problem,
which is a capacitated generalization of the well-studied min-cost
$\lambda$-edge-connected subgraph problem. Can we improve this to obtain an
$O(1)$ approximation or prove super-constant factor hardness of
approximation? We also highlighted the difficulty of Cap-SNDP by
focusing on the single pair problem, and showing both super-constant
hardness and an $\Omega(n)$ integrality gap example, even for the LP
with KC inequalities. We believe that understanding the single pair
problem is the key to understanding the general case. In particular,
we do not have a non-trivial algorithm even for instances in which the
edge capacities are either $1$ or $U$; this appears to capture much of
the difficulty of the general problem. As we noted, allowing multiple
copies of edges makes the problem easier; in practice, however, it may
be desirable to not allow too many copies of an edge to be used. It is
therefore of interest to examine the approximability of Cap-SNDP if we
allow only a small number of copies of an edge. Does the problem admit
a non-trivial approximation if we allow $O(1)$ copies or, say, $O(\log
n)$ copies? This investigation may further serve to delineate the easy
versus difficult cases of Cap-SNDP.

\paragraph{Acknowledgements:} CC's interest in capacitated network
design was inspired by questions from Matthew Andrews. He thanks
Mathew Andrews and Lisa Zhang for several useful discussions
on their work on capacitated network design for multi-commodity
flows.

\appendix

\section{Hardness of Approximation for  Cap-SNDP in Undirected Graphs}
\label{app:copiesHardness}

In this section, we prove Theorem~\ref{thm:copiesHardness} via a
reduction from the \emph{Priority Steiner Tree} problem. In the
Priority Steiner Tree problem, the input is an undirected graph
$G(V,E)$ with a cost $c(e)$ and a priority $P(e) \in \{1,2,\dots,k\}$
for each edge $e$. (We assume $k$ is the highest and $1$ the lowest
priority.) We are also given a root $r$ and a set of terminals $T
\subseteq V - \{r\}$; each terminal $t \in T$ has a desired priority
$P(t)$. The goal is to find a minimum-cost Steiner Tree in which the
unique path from each terminal $t$ to the root consists only of edges
of priority $P(t)$ or higher.\footnote{It is easy to see that a
minimum-cost subgraph containing such a path for each terminal is a
tree; given any cycle, one can remove the edge of lowest priority.}

Chuzhoy \etal \cite{CGNS} showed that one cannot approximate the
Priority Steiner Tree problem within a factor better than $\Omega(\log
\log n)$ unless $NP \subseteq DTIME(n^{\log \log \log n})$, even when
all edge costs are $0$ or $1$. Here, we show an
approximation-preserving reduction from this problem to Cap-SNDP with
multiple copies; this also applies to the basic Cap-SNDP problem, as
the copies of edges do not play a significant role in the
reduction. 

Given an instance $\script{I}_{pst}$ of Priority Steiner Tree on graph
$G(V,E)$ with edge costs in $\{0,1\}$, we construct an instance
$\script{I}_{cap}$ of Cap-SNDP defined on the graph $G$ as the
underlying graph.  Fix $R$ to be any integer greater than $2m^3$ where
$m$ is the number of edges in the graph $G$.  We now assign a capacity
of $u(e) = R^i$ to each edge $e$ with priority $P(e) = i$ in
$\script{I}_{pst}$. Each edge $e$ of cost $0$ in $\script{I}_{pst}$
has cost $c(e) = 1$ in $\script{I}_{cap}$, and each edge $e$ of cost
$1$ in $\script{I}_{pst}$ has cost $c(e) = m^2$ in
$\script{I}_{cap}$. Finally, for each terminal $t$, set $R_{tr} = R^i$
if $P(t) = i$; for every other pair of vertices $(p,q)$, $R_{pq} = 0$.

Let $C$ denotes the cost of an optimal solution to $\script{I}_{pst}$;
note that $C \le m$; we now argue that $\script{I}_{pst}$ has an
optimal solution of cost $C$ iff $\script{I}_{cap}$ has an optimal
solution of of cost between $C m^2$ and $Cm^2 + m < (C+1) m^2$. Given
a solution $E^*$ to $\script{I}_{pst}$ of cost $C$, simply select the
same edges for $\script{I}_{cap}$; the cost in $\script{I}_{cap}$ is
at most $C m^2 + m$ since in $\script{I}_{cap}$, we pay $1$ for each
edge in $E^*$ that has cost $0$ in $\script{I}_{pst}$.  This is
clearly a feasible solution to $\script{I}_{cap}$ as each terminal $t$
has a path to $r$ in $E^*$ containing only edges with priority at
least $P(t)$, which is equivalent to having capacity at least
$R_{tr}$. Conversely, given a solution $E'$ to $\script{I}_{cap}$ with
cost in $[Cm^2, (C+1)m^2)$, select a single copy of each edge in $E'$
as a solution to $\script{I}_{pst}$; clearly the total cost is at most
$C$. To see that this is a feasible solution, suppose that $E'$ did
not contain a path from some terminal $t$ to the root $r$ using edges
of priority $P(t)$ or more. Then there must be a cut separating $t$
from $r$ in which all edges of $E'$ have capacity at most $R^{P(t) -
  1}$. But since $E'$ supports a flow of $R^{P(t)}$ from $t$ to $r$,
it must use at least $R$ edges (counting with multiplicity); this
implies that the cost of $E'$ is at least $R \ge (C+1)m^2$, a
contradiction.

\bigskip
We remark that a similar reduction also proves $\Omega(\log \log n)$
hardness for the single-pair Cap-SNDP problem without multiple copies:
One can effectively encode an instance of the single-source
Fixed-Charge Network Flow (FCNF, \cite{CGNS}), very similar to
single-source Cap-SNDP with multiple copies, as an instance of
single-pair Cap-SNDP \emph{without} multiple copies: Create a new sink
$t^*$, and connect $t^*$ to each original terminal $t$ with a single
edge of cost $0$ and capacity $R_{tr}$. The only way to send flow
$\sum_{t \in T}R_{tr}$ flow from $t^*$ to the source $s$ is for each
terminal $t$ to send $R_{tr}$ to $s$. Thus,
Theorem~\ref{thm:singlePairHardness} is a simple consequence of the
$\Omega(\log \log n)$ hardness for single-source FCNF \cite{CGNS}.

\section{Omitted Proofs}
\label{app:omittedProofs}

\subsection{Proof of Theorem \ref{thm:uniform}: Near Uniform Cap-SNDP}
\label{app:nearlyUniform}

The algorithm described in Section \ref{sec:uniformReq} can be
extended to the case where requirements are \emph{nearly} uniform,
that is, if $R_{pq} \in [R,\gamma R]$ for all pairs $(p,q) \in V
\times V$. We obtain an $O(\gamma \log n)$-approximation, while
increasing the running time by a factor of $O(n^{4\gamma})$. We work
with a similar LP relaxation; for each set $S \subseteq 2^V$, we use
$R(S) = \max_{p \in S, q \not \in S}\{R_{pq}\}$ to denote the
requirement of $S$. Now, the original constraints are of the form
$$\sum_{e \in  \delta(S)} u(e) x_e \ge R(S)$$
for each set $S$, and we define the residual requirement for a set as
$R(S,A) = \min\{0,R(S)-u(A \cap \delta(S))\}$. The KC inequalities use
this new definition of $R(S,A)$.

Given a fractional solution $x$ to the KC LP, we modify the
definitions of highly fractional and nearly integral edges: An edge
$e$ is said to be nearly integral if $x_e \ge \frac{1}{40 \gamma \log
  n}$, and highly fractional otherwise. Again, for a fractional
solution $x$, we let $A_x$ denote the set of nearly integral edges;
the set $\S$ of small cuts is now $\{S \subseteq V \colon
\u(\delta(S)) \le 2 \gamma R\}$. From the cut-counting theorem, $|\S|
\le n^{4 \gamma}$.  We use $\L$ to denote the set of \emph{large}
cuts, the sets $\{S \subseteq V \colon \u(\delta(S)) > 2 \gamma R\}$.

As before, a fractional solution $x$ is \emph{good} if the original
constraints are satisfied, and the KC Inequalities are satisfied for
the set of edges $A_x$ and the sets in $\S$. These constraints can be
checked in time $O(n^{4 \gamma + 2} \log^2 n)$, so following the proof
of Lemma~\ref{lem:ell}, for constant $\gamma$, we can find a good
fractional solution in polynomial time.

The rounding and analysis proceed precisely as before: For each highly
fractional edge $e$, we select it for the final solution with
probability $40 \gamma \log n \cdot x_e$. The expected cost of this
solution is at most $O(\gamma \log n)$ times that of the optimal
integral solution, and analogously to the proofs of
Lemmas~\ref{lem:largeCuts} and \ref{lem:smallCuts}, one can show that
the solution satisfies all cuts with high probability. This completes
the proof of Theorem \ref{thm:uniform}.

\subsection{Proof of Theorem~\ref{thm:kWay}}

To prove Theorem~\ref{thm:kWay}, we work with the generalization of
\KCLP given below. For any $i$-way cut $\C$ and for any set of edges
$A$, we use $R(\C,A)$ to be $\max\{0, R_i - u(A \cap \delta(C)
\}$.\footnote{For ease of notation, we assume that for any edge $e$,
  $u(e) \le R_1$. This is not without loss of generality, but the
  proof can be trivially generalized: In the constraint for each
  $i+1$-way cut $\C$ such that $e \in \delta(\C)$, simply use the
  minimum of $u(e)$ and $R_i$.}

\vspace{-5mm}
\begin{align}
  \min ~~~ \sum_{e\in E} c(e)x_e \tag{\textrm{$k$-way} KC LP} \\ %\label{eq:kc} \\
  \vspace{-2mm}
  \forall i, \forall \textrm{$i$-way cuts } \C, ~~~~~~~~~~~~ \sum_{e\in \delta(\C)} u(e)x_e \ge R_i \tag{Original Constraints} \\
  \vspace{-3mm} \forall A\subseteq E, \forall i, \forall \textrm{$i$-way cuts } \C, ~~~~~~~~~
  \sum_{e\in \delta(\C)\setminus A} \min\{u(e),R(\C,A)\}x_e \ge R(\C,A)
  \tag{KC-inequalities} \\
  \vspace{-3mm} \forall e\in E,~~~~~~~~~~~~~~~~~~~~~ 0 \le x_e \le 1
  \notag
\end{align}

As before, given a fractional solution $x$ to this LP, we define $A_x$
(the set of nearly integral edges) to be $\{e \in E \colon x_e \ge
\frac{1}{40 k \log n}\}$. Define $\hat{u}(e) = u(e)x_e$ to be the
fractional capacity on the edges.  Let $\S_i := \{\C: \C \textrm{ is
  an $i+1$-way cut and } \u(\delta(C)) \le 2R_i\}$.  The solution $x$ is
said to be {\em good} if it satisfies the following three conditions:

\begin{enumerate}
\item[(a)] If the capacity of $e$ is $\u(e)$, the capacity of any
  $i+1$-way cut in $G$ is at least $R_i$; equivalently $x$ satisfies the
  original constraints.
\item[(b)] The KC inequalities are satisfied for the set $A_x$ and the
  sets in $\S_i$, for each $1 \le i \le k-1$. Note that if (a) is
  satisfied, then by Lemma~\ref{lem:kWayCutCounting}, $|\S_i| \le n^{4i}$.
\item[(c)] $\sum_{e\in E} c(e)x_e$ is at most the value of the optimum
  solution to the linear program ($k$-way KC LP).
\end{enumerate}

Following the proof of Lemma~\ref{lem:ell}, it is straightforward to
verify that there is a randomized algorithm that computes a good
fractional solution with high probability in $n^{O(k)}$ time.

Once we have a good fractional solution, our algorithm is to select
$A_x$, the set of nearly integral edges, and to select each highly
fractional edge $e \in E \setminus A_x$ with probability $40 k \log n
\cdot x_e$. If $F^*$ denotes the highly fractional edges that were
selected, we return the solution $A_x \cup F^*$. As before, it is
trivial to see that the expected cost of this solution is $O(k \log
n)$ times that of the optimal integral solution. 

We show below that for any $i \le k-1$, we satisfy all $i+1$-way cuts
with high probability; taking the union bound over the $k-1$ choices
of $i$ yields the theorem. 

As in Lemmas~\ref{lem:largeCuts} and \ref{lem:smallCuts}, we
separately consider the ``large'' and ``small'' $i+1$-way cuts.
First, consider any small cut $\C$ in $\S_i$. From the Chernoff bound
(Lemma~\ref{lem:Chernoff}) and the KC inequality for $\C$ and $A_x$,
it follows that the probability we fail to satisfy $\C$ is at most
$1/n^{19k}$. From the cut-counting Lemma~\ref{lem:kWayCutCounting},
there are at most $n^{4i} < n^{4k}$ such small cuts, so we satisfy all
the small $i+1$ way cuts with probability at least
$1-\frac{1}{n^{15k}}$.

For the large $i+1$-way cuts $\L$, we separately consider cuts of
differing capacities. For each $j \ge 2$, let $\L(j)$ denote the
$i+1$-way cuts $\C$ such that $jR_i \le \u(\C) \le (j+1)
R_i$. Consider any cut $\C \in \L_j$; if $u(A_x \cap \delta(C)) \ge
R_i$, then the cut $\C$ is clearly satisfied. Otherwise,
$\u(\delta(\C) \setminus A_x) \ge (j-1)R_i$. But since we selected
each edge $e$ in $\delta(\C) \setminus A_x$ for $F^*$ with probability
$40 k \log n \cdot x_e$, the Chernoff bound implies that we do not
satisfy $\C$ with probability at most $\frac{1}{n^{19 k (j-1)}}$.  The
cut-counting Lemma~\ref{lem:kWayCutCounting} implies there are most
$n^{2i(j+1)} < n^{2k(j+1)}$ such cuts, so we fail to satisfy any cut
in $\L(j)$ with probability at most $n^{21-17j}$. Taking the union
bound over all $j$, the failure probability is at most $2n^{-13}$.

\end{document}